\def\mck{{\sc Edge-$k$-Coloring}\xspace}
\def\mcpath{{\sc Edge-$2$-Coloring(Path)}\xspace}
\def\mctree{{\sc Edge-$k$-Coloring(Tree)}\xspace}
\def\mckcol{{\sc Edge-$k$-Coloring($k$-Colorable)}\xspace}
\def\mcclass{{\sc Edge-$k$-Coloring(Class)}\xspace}
\newcommand{\mcarg}[1]{{\sc Edge-$k$-Coloring(#1)}\xspace}
\def\paths{\text{\sc Path}\xspace}
\def\ceilk{\lceil\sqrt{k}\rceil}
\newcommand{\colorset}{\ensuremath{\operatorname{\mathcal{C}}}\xspace}
\newcommand{\coloring}{\ensuremath{\operatorname{\mathscr{C}}}\xspace}
\newcommand{\ivalue}{\ensuremath{v_{\text{i}}}\xspace}
\newcommand{\fvalue}{\ensuremath{v_{\text{f}}}\xspace}
\newcommand{\ec}{\ensuremath{E_{+}}\xspace}
\newcommand{\ed}{\ensuremath{E_{+}^{\text{d}}}\xspace}
\newcommand{\er}{\ensuremath{E_{-}}\xspace}
\newcommand{\es}{\ensuremath{E_{+}^{\text{s}}}\xspace}
\def\dc{d_{+}}
\def\dd{d_{+}^{\text{d}}}
\def\dr{d_{-}}
\def\ds{d_{+}^{\text{s}}}
\newcommand{\Ecrit}{\ensuremath{E_{\text{crit}}}\xspace}
\newcommand{\ecrit}{\ensuremath{e_{\text{crit}}}\xspace}
\newcommand{\NF}{\ensuremath{\operatorname{\textsc{Next-Fit}}}\xspace}
\newcommand{\FF}{\ensuremath{\operatorname{\textsc{First-Fit}}}\xspace}
\newcommand{\nf}{\ensuremath{\operatorname{\textsc{NF}}}\xspace}
\newcommand{\ff}{\ensuremath{\operatorname{\textsc{FF}}}\xspace}
\newcommand{\OPT}{\ensuremath{\operatorname{\textsc{Opt}}}\xspace}
\newcommand{\RP}{\ensuremath{\operatorname{\textsc{Rand}}}\xspace}
\newcommand{\ALG}{\ensuremath{\operatorname{\textsc{A}}}\xspace}
\newcommand{\FAIR}{\ensuremath{\operatorname{\textsc{F}}}\xspace}
\newcommand{\RAND}{\ensuremath{\operatorname{\textsc{R}}}\xspace}
\newcommand{\algo}[1]{\ensuremath{\operatorname{\textsc{#1}}}\xspace}
\newcommand{\ab}[1] {\left\vert #1\right\vert}
\DeclareMathOperator{\FFm}{FF}
\DeclareMathOperator{\NFm}{NF}
\DeclareMathOperator{\E}{\mathbb{E}}
\newcommand{\eis}{\ensuremath{E^{\text{s}}}\xspace}
\newcommand{\eid}{\ensuremath{E^{\text{d}}}\xspace}
\newcommand{\eir}{\ensuremath{E^{\text{r}}}\xspace}
\newcommand{\eic}{\ensuremath{E^{\text{c}}}\xspace}
\tikzstyle{none}=[inner sep=0pt]
\definecolor{hexcolor0xffffff}{rgb}{1.000,1.000,1.000}
\definecolor{hexcolor0x000000}{rgb}{0.000,0.000,0.000}
\definecolor{hexcolor0x00ff00}{rgb}{0.000,1.000,0.000}
\definecolor{hexcolor0xffff00}{rgb}{1.000,1.000,0.000}
\tikzstyle{rn}=[circle,fill=hexcolor0xffffff,draw=hexcolor0x000000,line width=0.3 pt, scale=0.3]
\tikzstyle{rn1}=[circle,fill=hexcolor0xffffff,draw=hexcolor0xffffff,line width=0.3 pt, scale=0.3]
\tikzstyle{ed}=[-,draw=hexcolor0x000000,line width=0.700]
\tikzstyle{sed}=[densely dotted,-,draw=hexcolor0x000000,line width=0.700]
\tikzstyle{ned}=[dashed,-,draw=hexcolor0x000000,line width=0.700]
\begin{document}
\journalname{Acta Informatica}

\title{Online Edge Coloring of Paths and Trees with a Fixed Number of Colors \thanks{
A preliminary version of this paper appeared in 12th Workshop on Approximation and Online Algorithms (WAOA 2014), LNCS 8952: 181-192, 2014. \\This work was partially supported by the Villum Foundation and the Danish Council for Independent Research, Natural Sciences.}
}

\author{Lene M. Favrholdt         \and
        Jesper W. Mikkelsen
}

\institute{Lene M. Favrholdt (\Letter) \at
              Department of Mathematics and Computer Science, University of Southern Denmark, \\
Campusvej 55, 5230 Odense M, Denmark\\
Tel.: +45 65 50 23 41      \\
              \email{lenem@imada.sdu.dk}  
           \and
           Jesper W. Mikkelsen \at
              Department of Mathematics and Computer Science, University of Southern Denmark\\
              Campusvej 55, 5230 Odense M, Denmark\\
              \email{jesperwm@imada.sdu.dk}
}

\maketitle

\begin{abstract}
We study a version of online edge coloring, where the goal is to
 color as many edges as possible using only a given number, $k$, of
 available colors.
All of our results are with regard to competitive analysis.
Previous attempts to identify optimal algorithms for this problem
 have failed, even for bipartite graphs.
Thus, in this paper, we analyze even more restricted graph classes,
 paths and trees.
For paths, we consider $k=2$, and for trees, we consider any $k \geq 2$.

We prove that a natural greedy algorithm called \FF is optimal among
 deterministic algorithms, on paths as well as trees.
For paths, we give a randomized algorithm, which is optimal and better
 than the best possible deterministic algorithm.
For trees, we prove that to obtain a better competitive ratio than
 \FF, the algorithm would have to be both randomized and
 unfair (i.e., reject  edges that could have been colored), and even
 such algorithms cannot be much better than \FF.
\end{abstract}

\section{Introduction}
In the classical edge coloring problem, the edges of a graph must be
 colored using as {\em few colors} as possible, under the constraint that no
 two adjacent edges receive the same color.
There is a natural dual version of the problem where a fixed number, $k$, of
 colors is given and the goal is to color as {\em many edges} as possible,
 using at most $k$ colors.
Sometimes the classical problem is called the {\em minimization} version and
 the dual problem is called the {\em maximization} version of the problem.

In this paper, we study the following online version of the maximization problem \cite{kedge}: The edges of the graph arrive one by one (in any order), each specified by its endpoints. Immediately upon receiving an edge, the algorithm must either color
 the edge with one of the $k$ colors or reject the edge. 
The decision of which of the $k$ colors to use or to reject the edge
 is irrevocable. 
We call this problem \mck. 
For any class, {\sc Class}, of graphs, we let \mcclass denote the problem
 of \mck restricted to graphs of class {\sc Class}.
For instance, \mcpath is the online problem of properly coloring as many edges
 as possible in a path using only two colors.

\paragraph{Quality measure.}
We measure the quality of an online algorithm, $\ALG$, for \mck using
the standard notion of competitive ratio \cite{CompRatio1,
  CompRatio2}. The competitive ratio compares the performance of
$\ALG$ to that of an optimal offline algorithm, $\OPT$. We denote by
$\ALG(\sigma)$ the number of edges colored by $\ALG$ when given a
sequence, $\sigma$, of edges. Similarly, $\OPT(\sigma)$ is the number
of edges in $\sigma$ colored by $\OPT$. The algorithm $\ALG$ is said
to be \emph{$C$-competitive} if there exists a constant $b$ such that
$\ALG(\sigma)\geq C\cdot \OPT (\sigma) -b$ for any input sequence
$\sigma$. The \emph{competitive ratio}, $C_{\ALG}(k)$, of $\ALG$ is
the supremum over all $C$ for which $\ALG$ is $C$-competitive. 
The competitive ratio of $\ALG$ for \mcclass is denoted by
 $C_{\ALG}^{\text{{\sc Class}}}(k)$.

Note that by this definition, $0\leq C_{\ALG}(k)\leq 1$. In particular, upper bounds on the competitive ratio are negative results and lower bounds are positive results.

If the inequality above holds even when $b=0$, we say that $\ALG$ is \emph{strictly $C$-competitive}. This gives rise to the notion of \emph{strict competitive ratio}. 
The results in this paper are strongest possible in the sense that all positive results hold for the strict competitive ratio and all negative results hold for the competitive ratio.

For randomized algorithms, a similar definition of competitive ratio
is used but $\ALG(\sigma)$ is replaced by the expected value
$\E[\ALG(\sigma)]$.

\paragraph{Notation and terminology.}
We label the $k$ colors $1,2,\ldots , k$. For $1\leq i\leq j\leq k$, define $\colorset_{i,j}=\{i,i+1 ,\ldots , j\}$. At any fixed point in the processing of the input sequence, we
denote by $\colorset_v$ the set of colors used at edges incident to the vertex $v$. A color $i\in \colorset_{1,k}$
is said to be \emph{available} at $v$ if $i\notin \colorset_v$.
Two colorings of a graph are said to be \emph{equivalent} if one can
 be obtained from the other by renaming the colors. 

If $v$ is a vertex in the input graph, we denote by $d(v)$ the number of edges incident to $v$. An \emph{isolated edge} $e=(v,u)$ is an edge such that $d(v)=d(u)=1$ at the time when $e$ is revealed.
For any $m$, we let $\langle e_1, e_2, \ldots, e_m \rangle$ denote a
 path with $m$ edges labeled such that, for $2 \leq i \leq
 m-1$, $e_i$ is adjacent to $e_{i-1}$ and $e_{i+1}$.
A \emph{star} with $m$
edges is the complete bipartite graph $K_{1,m}$.

\paragraph{Algorithms.}
An algorithm is called \emph{fair} if it never rejects an edge unless all of the $k$ colors have already been used on adjacent edges.
In \cite{kedge}, the following two fair deterministic algorithms were studied:

\FF (\ff) uses the lowest available color for each edge. It can be viewed as the
natural greedy strategy.

\NF (\nf) remembers the last used color $c_{\text{last}}$. For each edge, it
uses the first available color in the ordered sequence $\langle c_{\text{last}}+1, \ldots , k,
1, \ldots , c_{\text{last}}\rangle$. For the very first edge, it uses the color $1$.

For \mcpath, we introduce a new family of randomized algorithms:
For $\frac12 \leq p\leq 1$, $\RP_p$ is defined as follows.
Whenever an isolated edge is revealed, $\RP_p$ uses the color $1$ with
 probability $p$ and the color $2$ with probability $1-p$. 
All non-isolated edges are colored (with the only remaining color) if
 possible.
Note that $\RP_1$ is identical to \FF.

\paragraph{Previous results.}
In \cite{kedge} it is shown that any fair algorithm for \mck has a competitive ratio of at least $2\sqrt{3}-3\approx 0.46$, and at most $\frac{1}{2}$ if it is deterministic. The lower bound is tight in the sense that \NF has a competitive ratio of exactly $2\sqrt{3}-3$. The competitive ratio of \FF is at most $\frac29(\sqrt{10}-1) \approx 0.48$. It remains an open problem whether there is an algorithm with a competitive ratio better than $2\sqrt{3}-3$.
It is also shown that no algorithm (even when
allowing randomization) has a competitive ratio better than $\frac47
\approx 0.57$.

The problem \mckcol is also studied in \cite{kedge}. When the input graph is $k$-colorable, any fair algorithm is shown to have a competitive ratio of at least $\frac12$. Again, the lower bound is tight because \NF has a competitive ratio of $\frac12$. The competitive ratio of \FF is shown to be $\frac{k}{2k-1}$. An upper bound of $\frac23$ is given for deterministic algorithms in this case. 

We remark that all of the negative results mentioned above hold even if the input graph is bipartite. Thus, contrary to offline edge coloring, the online \mck problem does not appear to be significantly easier when restricted to bipartite graphs.

It is well known that for $k=1$ (i.e., for the matching problem), the greedy algorithm is an optimal deterministic algorithm with a competitive ratio of $\frac12$.

The {\em relative worst order ratio} \cite{WRdefinition,boyar2007relative} of both the maximization and minimization version of online edge coloring is studied in \cite{kedge2}. For the maximization version, it is shown that \FF and \NF are not (strictly) comparable. This is true even when the input is restricted to bipartite graphs. For the minimization version, \FF is proven better than \NF.

The {\em minimization} version of online edge coloring is studied in
\cite{Bar-Noy}. If an online algorithm never introduces a new color
unless forced to do so, it will never use more than $2\Delta -1$
different colors on graphs of maximum degree $\Delta$. It is shown in
\cite{Bar-Noy} that no (randomized) online algorithm can do better than this, even
if the input graph is restricted to being a forest.
On any graph, an optimal offline algorithm uses at most $\Delta +1$
 colors, and on trees, $\Delta$ colors suffice.
Hence, any algorithm that introduces a new color only when necessary,
 has a competitive ratio of 2, and this is optimal.

 The problem of online {\em vertex} coloring has received much attention in
the minimization version (see 
\cite{kierstead1998coloring} for a survey). 
For interval graphs, it has also been studied in the maximization version:
It follows from a result in~\cite{seatres} that
no deterministic fair algorithm can have a competitive ratio strictly
greater than $0$, even on interval graphs.
In that paper it is also shown that, on $k$-colorable interval graphs,
any fair algorithm has a competitive ratio of at least
$\frac12$. 
In~\cite{seatresTight}, it is shown that for deterministic algorithms,
this lower bound is tight, i.e., any deterministic fair algorithm has
a competitive ratio of exactly $\frac12$ on $k$-colorable interval
graphs.
Since edge coloring is equivalent to vertex coloring of line graphs,
our results and those of~\cite{kedge} and \cite{kedge2} can also
be seen as results on vertex coloring of (subclasses of) line graphs.
In particular, edge coloring a path of
$m$ edges is equivalent to vertex coloring a
path of $m$ vertices.

A study of approximation algorithms for the {\em offline maximization} version of {\em edge} coloring for multigraphs was initiated in \cite{Faprox}. This line of work has been continued in \cite{aprox1, aprox2, aprox3SIDMA, aprox6} for both simple graphs and multigraphs.

\paragraph{Our contribution.}
For \mcpath, we give a $\frac45$-competitive randomized algorithm and
prove that this is optimal. We also show that no deterministic
algorithm can be better than $\frac23$-competitive and observe that
this upper bound is tight, since \FF is $\frac23$-competitive. 
Finally, \NF turns out to be a worst possible fair algorithm with a competitive ratio of $\frac12$.

For \mctree where $k\geq 2$, we prove that \FF is
$\frac{k-1}{k}$-competitive and that no deterministic or fair
algorithm can be better than this.
Thus, an algorithm would have to be both randomized and unfair to
achieve a better competitive ratio than \FF.
However, we show that even such algorithms cannot be better than
$\frac{k}{k+1}$-competitive. 
We also show that
any fair algorithm is $\frac{2\sqrt{k}-2}{2\sqrt{k}-1}$-competitive
and that the competitive ratio of \NF is no better than this if $k$ is
a square number.
This implies that the competitive ratio of any fair algorithm goes to
 1 as $k$ goes to infinity.

\paths and {\sc Tree} are the first examples of graph classes
for which 
 an optimal deterministic algorithm for \mck has been identified.
\paths is the first graph class for which an optimal randomized
 algorithm has been identified.
It is also the first class for which it has been proven that a
 randomized algorithm can be better than a best possible deterministic
 algorithm.

We remark that all of our results for {\sc path} extend to collections of paths. Similarly, all results for {\sc tree} extend to forests. This is so because our positive results are always for the strict competitive ratio, and because our algorithms will color a single path in a collection of paths exactly as if only the edges of that path had been revealed (similarly for trees).

\section{A Charging Technique for Proving Positive Results}
\label{lowertech}
We will now describe a simple charging technique for proving lower bounds on the competitive ratio. The technique was first used for deterministic algorithms in \cite{kedge}. 
For some $C$, $0 \leq C \leq 1$, our goal is to prove that a given
 (possibly randomized) algorithm $\ALG$ is $C$-competitive. 
Assume that the edges of a graph $G=(V,E)$ have been given in some
 order, $\sigma$, and let $E_{\OPT} \subseteq E$ be the set of edges colored in some optimal
 solution. 

The \emph{initial value} $\ivalue(e)$ of an edge, $e\in E$, is
 $\ivalue(e)=\Pr [\text{$e$ is colored by $\ALG$}]$. 
For deterministic algorithms, $\ivalue(e) \in \{0,1\}$ for all $e \in E$.
Note that by
 linearity of expectation, we have $\E[\ALG(\sigma)]=\sum_{e\in
  E}\ivalue(e)$. 

The \emph{surplus} $v_{+}(e)$ of an edge, $e\in E$, (with respect to $C$) is 
\begin{equation*}
 v_{+}(e)=\begin{cases} \ivalue(e)-C, & \text{ if $e\in E_{\OPT}$} \\
 \ivalue(e), & \text{ if $e\notin E_{\OPT}$}\end{cases}
\end{equation*}
We let $E_+ \subseteq E$ and $E_- \subseteq E$ denote the sets of
 edges with positive and negative surplus, respectively.
Clearly, $E_- \subseteq E_{\OPT}$.
For deterministic algorithms, $E_-$ is exactly those edges in
 $E_{\OPT}$ that are not colored by the algorithm, and
 $E_+$ is the set of edges colored by the algorithm (assuming
 $C<1$).
The total positive surplus $\sum_{e\in E_+}v_{+}(e)$ will be redistributed among
 the edges in $E_{-}$ according to some strategy. This strategy is
 what needs to be defined when applying the technique. 

The \emph{final value} $\fvalue(e)$ of an edge $e\in E_{\OPT}$ is the
 total value of $e$ after the redistribution of surplus. 
Since only surplus value is redistributed, $\fvalue(e) \geq C$ for all $e
 \in E_{\OPT} \setminus E_-$. 
Thus, if it can be proven that $\fvalue(e) \geq C$ for all $e
 \in E_-$, then
\begin{align*}
\label{Ccomp}
\E[\ALG(\sigma)] 
& = \sum_{e\in E}\ivalue(e)\\
& = \sum_{e\in E}\fvalue(e) \\
& \geq\sum_{e\in E_{\OPT}}\fvalue(e), \text{ since } \fvalue(e) \geq 0
\text{ for all } e \in E\\
& \geq C\cdot\OPT(\sigma).
\end{align*}
Thus, it follows that $\ALG$ is (strictly) $C$-competitive.

\section{Coloring of Paths} 
In this section, we study the \mck problem when the input graph is a
path. 
This is only interesting if $k\leq 2$, since for $k \geq 3$, any fair algorithm 
colors all edges of any path. In this paper, we consider solely the case where $k=2$, but we remark that one can use similar techniques to obtain tight bounds on the competitive ratio when $k=1$. Also, the results for \paths can be extended to graphs of maximum degree $2$.

For \mcpath, our main result is a randomized algorithm with a competitive ratio of $\frac45$ and a proof that this is optimal. Before considering randomized algorithms, we give tight lower and upper bounds on the competitive ratio of deterministic algorithms.

For 2-colorable graphs, the ratios of
 Propositions~\ref{NFlemmapath} and \ref{FFpathlower} both follow
 from~\cite{kedge}.
Clearly, the positive results carry over to paths, but for $k=2$, the
 graphs used in~\cite{kedge} for the negative results are not connected.
We give simple proofs that the negative results are
 also valid when the graph is a path. 

\begin{proposition} 
\label{NFlemmapath}
For \mcpath, \NF is a worst possible fair algorithm with
\begin{displaymath}
C_{\nf}^{\paths}(2)=\frac12\,.
\end{displaymath}
\end{proposition}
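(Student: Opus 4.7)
The plan is to establish the two halves of the claim separately. First, to show that every fair algorithm on \mcpath has competitive ratio at least $\frac{1}{2}$ (and so in particular $C_{\nf}^{\paths}(2) \geq \frac{1}{2}$ and \NF is indeed a worst fair algorithm), I would simply invoke the result of \cite{kedge} that every fair algorithm is $\frac{1}{2}$-competitive on any $k$-colorable input graph. Every path admits a valid edge $2$-coloring, so this applies to \mcpath directly. As the paragraph preceding the proposition notes, the only part that does not carry over from \cite{kedge} is the matching upper bound, because the hard instances used there are disconnected; the real task is therefore to construct a \emph{connected} path on which \NF does not exceed $\frac{1}{2}$.

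For the upper bound $C_{\nf}^{\paths}(2) \leq \frac{1}{2}$, I would fix a large integer $m$ and consider the path whose edges are labelled $e_1, e_2, \dots, e_{2m+1}$ in natural order. The adversary presents the edges in the order
\[
 e_1, e_3, e_5, \dots, e_{2m+1},\; e_2, e_4, \dots, e_{2m},
\]
so that at arrival time every odd-indexed edge $e_{2i+1}$ is isolated and must be coloured by \NF. By the definition of \NF, edge $e_1$ receives colour $1$, and the rule advancing $c_{\text{last}}$ after each coloured edge then forces the subsequent isolated arrivals to receive colours $2, 1, 2, 1, \dots$ in turn. In particular, for every $i$ the two eventual path-neighbours $e_{2i-1}$ and $e_{2i+1}$ of the later-arriving edge $e_{2i}$ end up with \emph{different} colours.

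When the even edges are finally presented, each $e_{2i}$ is adjacent to two already-coloured edges bearing the two distinct colours $1$ and $2$, so no colour is available and \NF must reject it. Thus \NF colours exactly the $m+1$ odd-indexed edges, while \OPT colours all $2m+1$ edges of the path via the standard alternating $2$-edge-colouring. The resulting ratio $\frac{m+1}{2m+1}$ tends to $\frac{1}{2}$ as $m\to\infty$, yielding $C_{\nf}^{\paths}(2) \leq \frac{1}{2}$ and completing the proposition. I do not foresee any serious obstacle; the one detail worth checking carefully is that \NF's state variable $c_{\text{last}}$ genuinely alternates between $1$ and $2$ across the consecutive isolated colourings when $k=2$, but this is immediate from \NF's definition.
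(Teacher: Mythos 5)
Your proposal is correct and matches the paper's proof in all essentials: the upper-bound construction (a path with $2m+1$ edges, odd-indexed edges revealed first so that \NF alternates colors and must reject every even-indexed edge) is exactly the one used in the paper. The only cosmetic difference is in the lower bound, where you cite the fair-algorithm result of \cite{kedge} for $2$-colorable graphs, while the paper gives the equally valid one-line direct argument that each rejected edge is adjacent to two colored edges and each colored edge to at most two rejected ones.
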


\begin{proof}
The lower bound for fair algorithms follows, since 
 each rejected edge is adjacent to exactly two colored edges, and each
 colored edge is adjacent to at most two rejected edges.

For the upper bound, consider a path $\langle e_1,\ldots , e_{2m+1}\rangle$ with $2m+1$ edges.
The adversary first reveals the odd-numbered edges
 in order of increasing indices.
\NF will alternate between the two colors. 
Afterwards, the adversary reveals all the even-numbered edges. These edges must all be rejected by \NF. Thus, the competitive ratio of \NF is at most $\frac{m+1}{2m+1}$ which tends to $\frac12$ as $m$ tends to infinity.
\qed
\end{proof}

\begin{proposition}
\label{FFpathlower}
For \mcpath, \FF is an optimal deterministic algorithm with
 $$C_{\ff}^{\paths}(2) = \frac23\,.$$
\end{proposition}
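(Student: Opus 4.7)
The plan is to show $C_{\ff}^{\paths}(2) = 2/3$ in both directions and then to extend the upper-bound argument to rule out any deterministic algorithm beating $2/3$. For the lower bound I would apply the charging technique of Section~\ref{lowertech} with $C = 2/3$. Since paths are 2-colorable, $E_{\OPT}$ is the whole edge set, so each rejected edge needs total charge $2/3$. Decompose the path into maximal runs of \FF-colored edges $B_1,\dots,B_t$ separated by the $t-1$ rejected edges (well-defined because rejected edges cannot be adjacent and endpoint edges are never rejected). I would show $\sum_i|B_i|\ge 2(t-1)$, which is enough for the charging in which each internal block sends $1/3$ to each of its two adjacent rejected edges and each endpoint block sends $1/3$ to its single adjacent rejected edge, with extras from size-$\ge 3$ blocks covering the deficits of the size-$1$ ones.

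The key structural fact is that \FF's colors form a single alternating sequence along the path---within a block by the validity of the coloring, and across a rejected edge by the rejection condition---so the color of the first edge of $B_i$ is determined by the parity of $\sum_{j<i}|B_j|$. Every size-$1$ block has its unique edge colored $1$: for an endpoint block because its sole neighbor is uncolored when \FF processes the edge, and for an interior ``bad'' size-$1$ block because its two (to-be-rejected) neighbors must be processed strictly after the bad edge---otherwise they could not be rejected---so at that moment both neighbors are still uncolored. This forces $\sum_{j<i}|B_j|$ to be even whenever $|B_i|=1$, which in turn forces the sum of block sizes between any two consecutive size-$1$ blocks to be odd; since each intermediate size is $\ge 2$, at least one is $\ge 3$, and a short count then gives $\sum_i|B_i|\ge 2t-1$.

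For the upper bound, I would have the adversary reveal $\langle e_1,\dots,e_{3m+2}\rangle$ in three batches: first $e_2,e_5,\dots,e_{3m+2}$ (each isolated at arrival, colored $1$), then $e_1,e_4,\dots,e_{3m+1}$ (each adjacent to exactly one color-$1$ edge, colored $2$), and finally $e_3,e_6,\dots,e_{3m}$ (each flanked by a color-$1$ and a color-$2$ edge, rejected), yielding ratio $(2m+2)/(3m+2)\to 2/3$. To extend this to the full optimality claim, the main obstacle is that this fixed order is ineffective against algorithms whose behavior on isolated edges differs from \FF's (for instance \NF colors isolated edges with alternating colors, so the third batch would still be colorable); my plan here is an adaptive adversary that first reveals the ``middles'' (every third edge) of the path to observe the algorithm's color choices on those isolated edges, and then completes the path in an order chosen as a function of those choices so that one edge per triple is forced to be rejected.
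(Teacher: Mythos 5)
Your lower-bound argument is correct and takes a genuinely different route from the paper, which simply invokes the result from \cite{kedge} that \FF is $\frac{k}{2k-1}$-competitive on $k$-colorable graphs. Your self-contained block decomposition works: rejected edges are pairwise non-adjacent and never at the ends, the colored edges alternate in color along the path (within blocks by properness, across rejected edges by the rejection condition), and every size-$1$ block is colored $1$ because its only neighbours are never colored. One small imprecision: $\sum_{j<i}|B_j|$ need not be \emph{even} when $|B_i|=1$, since the first colored edge of the path need not receive color $1$ (reveal $e_2$ before $e_1$ and \FF colors $e_1$ with $2$). What your argument actually gives, and all you need, is that these prefix sums have the \emph{same} parity for all size-$1$ blocks; the conclusion that consecutive size-$1$ blocks are separated by at least one block of size $\geq 3$ then stands, and the count $\sum_i|B_i|\geq 2t-1\geq 2(t-1)$ goes through. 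Your adversary sequence against \FF itself is also correct.

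The gap is in the upper bound for arbitrary deterministic algorithms, which is the substance of the optimality claim. You correctly identify the obstacle (\NF defeats any fixed reveal order), but ``completes the path in an order chosen as a function of those choices'' does not resolve it: if the path's adjacency structure is fixed in advance and only the reveal order of the remaining edges is adaptive, an algorithm that alternates colors on the isolated middles can still color \emph{every} remaining edge, in any order. The essential missing idea is that the adversary must adaptively choose the \emph{structure} of the path: the middles are revealed on fresh, disjoint vertex pairs, and only afterwards does the adversary decide which middles become neighbours, chaining same-colored middles consecutively (with rejected middles in their own chain). Between two consecutive middles of the same color, one of the two connecting edges is always forced to be rejected; only the $O(1)$ junctions between color classes escape, which still yields the asymptotic bound $\frac{2n+O(1)}{3n-O(1)}\to\frac23$. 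The paper sidesteps the grouping step by using length-two paths as the initial gadgets: any such gadget that is fully colored automatically contains both colors, so the adversary only needs to choose the \emph{orientation} of each single connecting edge (color-$1$ end to color-$2$ end), and gadgets with a rejection are chained separately. Either construction works, but without the grouping (or orientation) rule your adversary does not force any rejections against an alternating algorithm.
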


\begin{proof}
Since a path is $2$-colorable, the lower bound for \FF follows from a
 result in \cite{kedge} stating that the competitive ratio of \FF is
 $\frac{k}{2k-1}$ for the \mckcol problem.
It also follows from Lemma~\ref{rplower} below, with $p=1$.

For the upper bound, let $\algo{D}$ be a deterministic algorithm and let $n\in\mathbb{N}$. The adversary first gives $n$ disjoint paths of length two. Call these the \emph{initial paths}. Let $F=\{f_{1},\ldots , f_{n_{1}}\}$ be the set of those initial paths in which both edges have been colored by $\algo{D}$ and let $U=\{u_{1},\ldots , u_{n_{2}}\}$ be the set of those initial paths in which at least one edge has been rejected. 

In each path in $F$, both colors 1 and 2 are represented.
The adversary reveals an edge connecting the edge with the color $1$ in the path $f_i$ to the edge with the color $2$ in the path $f_{i+1}$, for $1\leq i< n_1$. These connecting edges must be rejected by $\algo{D}$ so the number of colored edges in this component is at most $2n_1$. The adversary also reveals an edge connecting $u_i$ to $u_{i+1}$, for $1\leq i < n_2$. Even if all of these connecting edges can be colored, the number of colored edges in this component is at most $2n_2-1$. 

Finally, if both $F$ and $U$ are non-empty, the adversary connects the two constructed paths by a single edge which may possibly be colored. It follows that the number of colored edges can be at most $2n_1+(2n_2-1)+1=2n$. Since the total number of edges is $3n-1$, we get an upper bound on the competitive ratio of $\frac{2n}{3n-1}$ which tends to $\frac23$ as $n$ tends to infinity.
\qed\end{proof}

Knowing that no deterministic algorithm can be better than
 $\frac23$-competitive, a natural question to ask is how good a
 randomized algorithm can be. 
To this end, we analyze the family of fair, randomized
 algorithms, $\RP_p$, defined in the introduction.

\begin{lemma}
\label{rpupper}
Let $\frac12 \leq p\leq 1$. Then,
\begin{displaymath}
  C_{\RP_p}^{\text{\paths}}(2) \leq \min\left\{p^2-p+1, \frac{2}{3}(-p^2+p+1)\right\}.
\end{displaymath}
\end{lemma}

\begin{proof}
The adversary will reveal the edges of a path $P=\langle e_1,\ldots , e_m\rangle$ with $m$ edges. Consider the following two adversary strategies for doing so:
\begin{enumerate}[(i)]
\item The adversary first reveals all edges $e_i$ with $i\equiv 1\pmod{3}$, followed by all edges $e_i$ with $i\equiv 0\pmod{3}$. Finally, all the remaining edges are revealed.

\item The adversary first reveals all the odd numbered edges and thereafter all the even numbered edges.
\end{enumerate}

If the adversary uses strategy (i), it chooses $m$ such that $3$ divides $m-1$. Note that each edge $e_i$ with $i\equiv 2\pmod{3}$ has probability $p(1-p)+(1-p)p$ of being colored. It follows that 
\begin{align*}
\E[\RP_p(P)]
&=\left(\frac{1}{3}(m-1)+1\right) + \frac{1}{3}(m-1)+\frac{2}{3}(m-1)(1-p)p\\
&=\frac23(-p^2+p+1)(m-1)+1.
\end{align*}
If the adversary uses strategy (ii), it makes sure that the number, $m$, of edges in $P$ is odd. Note that each even numbered edge has probability $p^2+(1-p)^2$ of being colored. It follows that 

\begin{align*}
\E[\RP_p(P)]&=\left(\frac{1}{2}(m-1)+1\right)+\frac{1}{2}(m-1)(p^2+(1-p)^2)\\ 
&= (p^2-p+1)(m-1)+1.
\end{align*}

Thus, if $\frac23 (-p^2+p+1)\leq p^2-p+1$, the adversary uses strategy (i) and otherwise it uses strategy (ii). By choosing $m$ sufficiently large, this proves the upper bound.
\qed\end{proof}

\begin{lemma}
\label{rplower}
Let $\frac12 \leq p\leq 1$. Then,
\begin{displaymath}
  C_{\RP_p}^{\text{\paths}}(2) \geq \min\left\{p^2-p+1, \frac{2}{3}(-p^2+p+1)\right\}.
\end{displaymath}
\end{lemma}

\begin{proof}
Let $P$ be a path and assume that the edges of $P$ are given to $\RP_p$ in some order.
Consider an edge $e$ at the time of its arrival. If two edges adjacent to $e$ have already been revealed, we say that $e$ is a \emph{critical edge}. Denote by $\Ecrit$ the critical edges of $P$. Note that since $\RP_p$ is fair, it will never reject an edge which is not critical.

We let $C=\min\left\{p^2-p+1, \frac{2}{3}(-p^2+p+1)\right\}$ and apply the charging technique described in Section~\ref{lowertech}. That is, we will define a strategy for distributing the total surplus among the edges of the path such that all edges receive a final value of at least $C$. This will imply that $\RP_p$ is $C$-competitive. Note that all non-critical edges have an initial value of $1$ and, hence, a surplus of $1-C$. Thus, $E_- \subseteq \Ecrit$.

Let $e$ be a non-critical edge. Consider the largest connected component $P_e$ induced by edges from $E\setminus \Ecrit$ containing $e$. Let $e_{\text{first}}$ be the edge in $P_e$ which was revealed first. We define $l(e)$ to be the length of the shortest path in $P_e$ containing $e$ and $e_{\text{first}}$. If $e$ is revealed as an isolated edge, then $l(e)=1$. We say that $e$ is \emph{odd} if $l(e)$ is odd and that $e$ is \emph{even} if $l(e)$ is even.
The following fact is easily proven by induction on $l(e)$. 

\begin{enumerate}[Fact:]
\item[Fact:] {\em If $e$ is odd, the probability of $e$ being colored with the color $1$ is $p$.\\ If $e$ is even, the probability of $e$ being colored with the color $1$ is $1-p$.} 
\end{enumerate}

Let $\ecrit$ be a critical edge. Denote by $e_{\text{l}}$ and $e_{\text{r}}$ the two edges adjacent to $\ecrit$. These must both be non-critical and thus must be colored by $\RP_p$. The edge $\ecrit$ will be colored if and only if $e_{\text{l}}$ and $e_{\text{r}}$ are colored with the same color. Note that the random variable denoting the color received by $e_{\text{l}}$ is independent of the random variable denoting the color received by $e_{\text{r}}$. We consider two cases:

\paragraph{Case 1: $e_{\text{l}}$ and $e_{\text{r}}$ are both odd or both even.} By the fact stated above, the probability of $\ecrit$ being colored is $p^2+(1-p)^2$. It follows that 
 $$\ivalue(\ecrit)=p^2+(1-p)^2 = 2p^2-2p+1\,.$$ 
Since $e_{\text{l}}$ and $e_{\text{r}}$ are non-critical, they both have a surplus of at least $1-C$. We will transfer a value of $\frac{1}{2}(1-C)$ from each of them to the critical edge $\ecrit$. Thus, the final value of $\ecrit$ is 
\begin{align*}
\fvalue(\ecrit) 
& \geq (2p^2-2p+1)+(1-C) \\
& =2(p^2-p+1)-C\\
& \geq C, \text{ since } C \leq p^2-p+1
\end{align*}

\paragraph{Case 2: One of $e_{\text{l}}$ and $e_{\text{r}}$ is odd and the other is even.} Without loss of generality, assume that $e_{\text{l}}$ is odd and that $e_{\text{r}}$ is even. By the fact stated above, the probability of $\ecrit$ being colored is $p(1-p)+(1-p)p$.
Thus,
 $$\ivalue(\ecrit) = 2p(1-p) = 2(-p^2+p)\,.$$
 Since $e_{\text{r}}$ is even, it must be adjacent to at least one non-critical edge $e_{\text{r}}'$. We transfer a value of $\frac{1}{2}(1-C)$ from each of $e_{\text{l}}$ and $e_{\text{r}}'$ to $\ecrit$ and a value of $1-C$ from $e_{\text{r}}$ to $\ecrit$. Transferring the entire surplus of $1-C$ from $e_{\text{r}}$ to $\ecrit$ is possible, since $e_{\text{r}}'$ is non-critical and therefore $\ecrit$ is the only critical edge adjacent to $e_{\text{r}}$. 
Thus, the final value of $\ecrit$ is 
\begin{align*}
\fvalue(\ecrit)
& = 2(-p^2+p) + 2(1-C)\\
& = 2(-p^2+p+1-C) \\
& \geq C, \text{ since } C \leq \frac{2}{3}(-p^2+p+1)
\end{align*}
\qed\end{proof}

Lemmas~\ref{rpupper} and \ref{rplower} immediately imply the following theorem.

\begin{theorem}
\label{rp}
Let $\frac12 \leq p\leq 1$. Then,
\begin{displaymath}
  C_{\RP_p}^{\text{\paths}}(2)=\min\left\{p^2-p+1, \frac{2}{3}(-p^2+p+1)\right\}.
\end{displaymath}
\end{theorem}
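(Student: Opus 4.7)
The plan is to prove the upper bound by exhibiting two adversary constructions matching the two arguments of the minimum, and the matching lower bound via a short parity-accounting argument slotted into the charging framework of Section~\ref{lowertech}. Throughout I classify each edge of the input path by its status at the moment of revelation as \emph{isolated}, an \emph{extension} (non-isolated with exactly one prior adjacent edge), or a \emph{merger} (non-isolated with a prior adjacent edge at both endpoints). Writing $I,X,M$ for the three counts and using that the final graph is a connected path yields $I=M+1$ and $m=I+X+M$.

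For the upper bound, to reach ratio $p^2-p+1$ I would reveal the odd-indexed edges of a path $\langle e_1,\ldots,e_{2n-1}\rangle$ first (all isolated) and then the even-indexed edges as connectors; each connector is a merger of two single-edge components, so it is rejected exactly when its two neighbouring seeds drew different colours, an event of probability $2p(1-p)$, and the ratio tends to $1-p(1-p)$. To reach $\tfrac{2}{3}(1+p-p^2)$, I would present $n$ disjoint $2$-paths, each revealed as isolated-plus-extension, and then link consecutive $2$-paths by single edges. Within each $2$-path the isolated end and the extension end have opposite colour statistics; orienting every $2$-path so that each link joins an isolated end to an extension end makes each link rejected with probability $p^2+(1-p)^2$, and the ratio tends to $1-\tfrac{p^2+(1-p)^2}{3}=\tfrac{2}{3}(1+p-p^2)$.

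For the lower bound, every edge is in $E_{\OPT}$ because paths are $2$-colourable, so $v_i(e)=\Pr[e\text{ is coloured}]$. Isolated and extension edges are always coloured ($v_i=1$). A merger edge $e$ sees the colours at two endpoints lying in two distinct colored sub-components (a merger cannot close a cycle in a path), and distinct sub-components are coloured by independent random seeds. Tagging each endpoint of each colored sub-component with a \emph{parity} bit recording the distance mod $2$ from the seed to that endpoint, one obtains $v_i(e)=p^2+(1-p)^2$ if the two parities at $e$ coincide (\emph{Case 1}, $M_1$ such edges) and $v_i(e)=2p(1-p)$ otherwise (\emph{Case 2}, $M_2$ such edges). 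Using $I=M+1$, the bound $E[\RP_p(\sigma)]\ge C\cdot m$ reduces, for either value of $C$, to the purely structural inequality
\begin{equation*}
 m \;\ge\; 2M_1+3M_2,\qquad\text{equivalently}\qquad M_2\le X+1.
\end{equation*}
Multiplying this by $p(1-p)$ or by $\tfrac{p^2+(1-p)^2}{3}$ and using the elementary fact that $5p(1-p)\gtrless 1$ handles the two branches of the minimum. In the language of Section~\ref{lowertech}, the surplus $1-C$ carried by each isolated or extension edge is distributed as $2$ units to every Case 1 merger and $3$ units to every Case 2 merger.

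The main obstacle is this structural inequality, which I would prove in the slightly stronger form $M_2\le X$ by tracking, across the input, the multiset of parities attached to the degree-one endpoints of all current graph-components. An isolated edge contributes two parity-$0$ endpoints; an extension at a parity-$b$ end replaces that entry by a parity-$(1-b)$ entry; a merger removes the two parities at its merged endpoints (both of parity $1$ for a Case 1 merger whose sides both have parity $1$, one of each parity for a Case 2 merger). Letting $X_0,X_1$ count extensions at parity-$0$ respectively parity-$1$ ends and $M_1^{(1)}$ count the Case 1 mergers of this $(1,1)$ subtype, the final number of parity-$1$ endpoints in the completed path equals $X_0-X_1-2M_1^{(1)}-M_2$ and is a non-negative integer, giving $M_2\le X_0-X_1\le X$ as required.
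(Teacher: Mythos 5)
Your proof is correct. The upper-bound half is essentially the paper's own argument: your odd/even construction is the paper's strategy (ii), and your ``$2$-paths plus links'' construction is the paper's strategy (i) (revealing every third edge as a seed, then the extensions, then the connecting mergers), with the same per-edge rejection probabilities and the same limiting ratios. The lower-bound half, however, takes a genuinely different route. You share with the paper the classification of rejected candidates into mergers whose two (necessarily non-merger, independently seeded) neighbours have equal or unequal parity, with colouring probabilities $p^2+(1-p)^2$ and $2p(1-p)$ respectively; this is exactly the paper's ``Fact'' and its Cases 1--2. But where the paper then defines a local redistribution of surplus and must verify, edge by edge, that each critical edge collects enough without overcharging any donor --- including the somewhat delicate move in its Case 1 of pulling the full surplus $1-C$ from $e_{\text{r}}$ and an extra $\frac12(1-C)$ from the second-neighbour $e_{\text{r}}'$ --- you replace the entire charging step by the single global inequality $m\geq 2M_1+3M_2$, equivalently $M_2\leq X+1$, proved via the non-negativity of the parity-$1$ endpoint count $X_0-X_1-2M_1^{(1)}-M_2$, followed by a one-line case split on $5p(1-p)\gtrless 1$. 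I checked the invariant: end edges of graph-components are never mergers, so each extension indeed flips one recorded parity and each merger deletes exactly the two parities that determine its case, and the algebra $(1-C)m\geq 2p(1-p)M_1+(p^2+(1-p)^2)M_2$ does follow from $m\geq 2M_1+3M_2$ on both branches of the minimum. What your version buys is that the tightness of both adversary constructions becomes transparent (each saturates $m\geq 2M_1+3M_2$) and all overcharging concerns disappear; what the paper's local charging buys is a scheme that generalizes more readily to the tree arguments later in the paper, where no such clean global identity is available.
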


Theorem \ref{rp} shows that, for $p=\varphi/\sqrt{5}\approx 0.7236$, $\RP_p$ has a
 competitive ratio of $\frac45$ (where $\varphi=(1+\sqrt{5})/2$ is the golden ratio). In practice, one might prefer that $p$ is, e.g., a dyadic rational (a rational of the form $a/2^b$ for $a,b\in\mathbb{N}$). It follows from Theorem~\ref{rp} that the competitive ratio of $\RP_p$ can be made arbitrarily close to $\frac45$ by choosing a dyadic rational $p$ sufficiently close to the irrational number $\varphi/\sqrt{5}$.

We will now show that $\frac45$ is the best possible competitive ratio of \emph{any} algorithm. In fact, we show that this is true even if the algorithm knows the length of the path in advance (so that only the ordering of the edges is unknown). We will use Yao's minimax principle~\cite{Yao,BE98b}. Informally, this principle allows us to prove an upper bound of $c$ on the achievable \emph{randomized} competitive ratio by exhibiting a probability distribution over permutations of the edges of a path and showing that no \emph{deterministic} algorithm can, in expectation, color more than a fraction of $c$ of the edges of the path.

\begin{theorem}
\label{yao}
If $\algo{R}$ is a (possibly randomized) algorithm for the problem \mcpath, then
 $$C_{\algo{R}}^{\paths}(2) \leq \frac{4}{5}.$$
\end{theorem}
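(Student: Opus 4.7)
The plan is to invoke Yao's principle, reducing the problem to producing a probability distribution $\mathcal{D}$ over input sequences such that $E_{\sigma \sim \mathcal{D}}[A(\sigma)] \leq \tfrac{4}{5}\,E_{\sigma \sim \mathcal{D}}[\OPT(\sigma)] + O(1)$ for every deterministic online algorithm $A$. The two adversarial orderings (i) and (ii) used in the proof of Theorem~\ref{rp} become simultaneously tight at $p=\varphi/\sqrt{5}$, strongly suggesting that mixing them uniformly will yield the matching upper bound. For each $n\geq 1$, I would let $\mathcal{D}_n$ choose, each with probability $1/2$, either $\sigma_A$ (a path on $3n+1$ edges revealed according to strategy~(i)) or $\sigma_B$ (a path on $2n+1$ edges revealed according to strategy~(ii)). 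The crucial observation is that the first phase of each strategy reveals $n+1$ pairwise disjoint isolated edges, so any deterministic $A$ makes identical decisions $c_1,\ldots,c_{n+1}\in\{0,1,2\}$ (with $0$ denoting rejection) for these edges in both inputs.

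With the $c_j$'s fixed, I would bound $A(\sigma_A)$ and $A(\sigma_B)$ separately. Set $K=|\{j:c_j=0\}|$, $S'=|\{i:c_i=c_{i+1}\neq 0\}|$, $D'=|\{i:c_i\neq c_{i+1},\ c_i,c_{i+1}\neq 0\}|$, and $Z=n-S'-D'$. For $\sigma_B$, each critical (even-indexed) edge is colorable exactly when at least one of its two isolated neighbors was rejected or both were colored alike, yielding $A(\sigma_B)\leq (n+1-K)+(n-D')$. For $\sigma_A$, a short case analysis of the best combined contribution of each pair consisting of a $0$-mod-$3$ edge and its adjacent critical edge -- allowing the algorithm to reject the $0$-mod-$3$ edge in order to free the critical one -- shows that this pair contributes at most $2$, and at most $1$ when $c_i=c_{i+1}\neq 0$. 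Thus $A(\sigma_A)\leq (n+1-K)+2n-S'$.

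Adding the two bounds and using $Z\leq 2K$ (each rejected isolated edge is adjacent to at most two critical positions) yields $A(\sigma_A)+A(\sigma_B)\leq 4n+2$. Consequently $E_{\mathcal{D}_n}[A]\leq 2n+1$, while $E_{\mathcal{D}_n}[\OPT]=(5n+2)/2$ because paths are $2$-colorable and $\OPT$ colors every edge. The ratio $(4n+2)/(5n+2)$ tends to $4/5$ as $n\to\infty$, giving the stated bound.

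The main obstacle is the case analysis for $\sigma_A$ in the presence of \emph{unfair} algorithms: rejecting a $0$-mod-$3$ edge can in principle rescue an otherwise-lost critical edge, so one must verify that the aggregate gain across $\sigma_A$ and $\sigma_B$ never exceeds the fair bound. The accounting inequality $Z\leq 2K$ is exactly what captures why these rejections cannot help on average.
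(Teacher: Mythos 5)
Your proof is correct, and it takes a genuinely different route from the paper's. The paper also invokes Yao's principle, but its hard distribution is a single random path of fixed length $3^b-2$ revealed in a multi-scale process: a geometrically distributed number of subphases of isolated edges (every second edge of a prefix), followed by a phase giving every third remaining edge, followed by the connectors; the analysis then lower-bounds the expected number of rejections per subphase. You instead mix, with probability $\tfrac12$ each, the two \emph{pure} adversary orderings already used in the upper-bound direction of Theorem~\ref{rp}, exploiting the fact that they balance exactly at the optimal $p$. The coupling device --- that both sequences begin with the same $n+1$ isolated-edge requests, so a deterministic algorithm commits to the same $c_1,\ldots,c_{n+1}$ in both worlds --- is sound (the adversary can present identical vertex labels in the common prefix), and your accounting checks out: $A(\sigma_A)+A(\sigma_B)\leq 2(n+1-K)+(2n-S')+(n-D')$ together with $n-S'-D'=Z\leq 2K$ gives $4n+2$ against $E[\OPT]=(5n+2)/2$, and the case analysis for the $0\bmod 3$ / $2\bmod 3$ pairs (at most $2$ in general, at most $1$ when $c_i=c_{i+1}\neq 0$, regardless of whether the algorithm unfairly rejects the $0\bmod 3$ edge) is exactly the right way to handle unfair algorithms. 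Your construction is arguably more transparent --- it makes visible \emph{why} $4/5$ is the threshold, namely the equalization of the two strategies --- at the cost of having $\OPT$ be a random variable in the Yao argument (harmless, since the ratio of expectations still tends to $4/5$ and the gap $c\cdot E[\OPT]-E[A]$ diverges for $c>4/5$); the paper's single-path, multi-phase distribution keeps $\OPT$ deterministic but requires a more intricate per-subphase rejection count. One small point worth making explicit if you write this up: the adversary must fix the vertex names of the isolated edges identically in both branches before deciding which connecting edges to reveal, which is what licenses the claim that the algorithm's decisions coincide.
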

\begin{proof}
 Let $M\in\mathbb{N}$ be a large even integer and consider a path $P$ consisting of $\frac{5}{2}M+1$ edges. We will define a probability distribution over all permutations of the edges of $P$ by describing a randomized adversary.
 
The adversary reveals the edges of $P$ as follows: First, it reveals $M+1$ isolated edges  $\{e_1,\ldots , e_{M+1}\}$. Afterwards, the adversary picks uniformly at random a set of indices $S\subseteq\{2,\ldots , M+1\}$ such that $\ab{S}=\frac{M}{2}$. For each index $i\in S$, the adversary reveals a single edge, $e$, connecting $e_{i}$ and $e_{i-1}$ (so that $\langle e_{i-1}, e, e_{i}\rangle $ becomes a subpath of $P$). 
Let $\overline{S} = \{1,2,\ldots,M+1\} \setminus S$.
For each index $i\in \overline{S}$, the adversary reveals two edges, $e$ and $e'$, connecting $e_{i}$ and $e_{i-1}$ (so that $\langle e_{i-1},e,e',e_i\rangle$ becomes a subpath of $P$). Note that the resulting path $P$ has $M+1+\frac{M}{2}+M=\frac{5}{2}M+1$ edges.

Let $\algo{D}$ be any deterministic algorithm, and let $\E[\algo{D}(P)]$ denote the expected number of edges colored by $\algo{D}$ when the edges of $P$ are revealed as described above. We will show that
 $\E[\algo{D}(P)]$ is at most \mbox{$\frac{4}{5}\OPT(P)+1$}. 
Since by Yao's principle, $C_{\algo{R}}^{\paths}(2) \leq \E[\algo{D}(P)]/\OPT(P)$ (and $\E[\algo{D}(P)]$ can be arbitrarily large), this will complete the proof. 

We first introduce some terminology to describe a coloring produced by $\algo{D}$.
For any $i$, $2 \leq i \leq M+1$, we say that $e_{i-1}$ is the \emph{previous} isolated edge of $e_{i}$. 
The set of isolated edges is partitioned into the following four sets:
\begin{enumerate}[\eis:]
\item[\eis:] Isolated edges colored with the {\em same} color as the previous isolated edge. 
\item[\eid:] Isolated edges colored {\em differently} from the previous isolated edge. 
\item[\eir:] Isolated edges that are {\em rejected}.
\item[\eic:] Isolated edges that are {\em colored} but whose previous isolated edge is {\em rejected}. 
\end{enumerate}
Clearly, $|\eic| \leq |\eir|$. 

Let $X$ be a random variable denoting the total number of edges rejected by $\algo{D}$. We will give a lower bound on $\E[X]$.
For each isolated edge $e_i$ with $2 \leq i
\leq M+1$, consider the probability of at least one of $e_i$ and the edge(s)
connecting $e_i$ to $e_{i-1}$ being rejected. For each edge in
$\eis$, the algorithm $\algo{D}$ makes a rejection with probability
$\frac{1}{2}$, since it will be forced to do so if
$i\in \overline{S}$. Conversely, for each edge in $\eid$, the algorithm $\algo{D}$
makes a rejection with probability $\frac{1}{2}$, since it is forced to
do so if $i\in{S}$. Also, for each edge in $\eir$, the algorithm
$\algo{D}$ makes a rejection with probability $1$. Combining these observations with the linearity of expectation, we get that
\begin{align*}
\E[X]&\geq \left(\frac{1}{2}\ab{\eis} + \frac{1}{2}\ab{\eid} + \ab{\eir}\right) \\
&\geq \frac{1}{2}\left(\ab{\eis} + \ab{\eid} + \ab{\eir} + \ab{\eic}\right),
      \text{ since } |\eir| \geq |\eic|\\
&\geq \frac{M}{2}\,.
\end{align*}
Finally, since \OPT can color all $\frac{5}{2}M+1$ edges of the path, we get that 
 $$\E[\algo{D}(P)]\leq 2M+1 < \frac45 \OPT (P) + 1\,.$$
Since $M$ can be arbitrarily large, this proves the theorem.
\qed\end{proof}

Theorems~\ref{rp} and \ref{yao} together give the following corollary.

\begin{corollary}
For $p=\frac{\varphi}{\sqrt{5}}$, $\RP_p$ is optimal for \mcpath with\vspace{-1mm}
\begin{displaymath}
C_{\RP_{p}}^{\text{\sc path}}(2)=\frac45.
\end{displaymath}
\end{corollary}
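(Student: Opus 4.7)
The plan is to obtain the corollary by combining the two preceding theorems: Theorem~\ref{yao} gives the upper bound $C_{\algo{R}}^{\paths}(2)\leq \tfrac{4}{5}$ for \emph{every} (randomized) algorithm, so it suffices to identify a value of $p$ for which the closed form $\min\{p^2-p+1,\,\tfrac23(-p^2+p+1)\}$ from Theorem~\ref{rp} equals $\tfrac45$. Since $p^2-p+1$ is decreasing on $[\tfrac12,1]$ and $\tfrac23(-p^2+p+1)$ is increasing on $[\tfrac12,1]$ (its vertex is at $p=\tfrac12$ only when evaluating the concave parabola $-p^2+p+1$, which is decreasing past $\tfrac12$; actually I should be careful here and check), the minimum is maximized at the $p$ that equates the two expressions.

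First I would solve $p^2-p+1=\tfrac23(-p^2+p+1)$. Clearing denominators gives $3p^2-3p+3=-2p^2+2p+2$, i.e.\ $5p^2-5p+1=0$, whose root in $[\tfrac12,1]$ is $p=\tfrac{5+\sqrt{5}}{10}$. A short calculation shows this equals $\varphi/\sqrt{5}$, since $\varphi/\sqrt{5}=(1+\sqrt{5})/(2\sqrt{5})=(\sqrt{5}+5)/10$. Next, I would evaluate the competitive ratio at this $p$: from $5p^2-5p+1=0$ one immediately gets $p^2-p=-\tfrac15$, whence $p^2-p+1=\tfrac45$, and of course $\tfrac23(-p^2+p+1)=\tfrac23\cdot\tfrac65=\tfrac45$ as well. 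Thus Theorem~\ref{rp} yields $C_{\RP_p}^{\paths}(2)=\tfrac45$.

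Finally, I would note that this matches the upper bound of Theorem~\ref{yao}, so $\RP_p$ attains the best competitive ratio possible for \mcpath. There is no real obstacle here; the only point worth double-checking is the monotonicity argument that justifies choosing $p$ at the crossing of the two branches (rather than a boundary point of $[\tfrac12,1]$), which is immediate since $p^2-p+1$ decreases from $\tfrac34$ at $p=\tfrac12$ to $1$ at $p=1$—wait, it increases on $[\tfrac12,1]$—so I will recheck the direction, but regardless the two branches coincide at the claimed value and both equal $\tfrac45$, which is provably optimal, so the corollary follows.
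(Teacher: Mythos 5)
Your proposal is correct and matches the paper's (implicit) argument exactly: evaluate the formula of Theorem~\ref{rp} at $p=\varphi/\sqrt{5}=(5+\sqrt{5})/10$, where both branches of the minimum equal $\tfrac45$, and combine with the universal upper bound of Theorem~\ref{yao}. The wobble about which branch is increasing on $[\tfrac12,1]$ is harmless, since optimality comes from Theorem~\ref{yao} rather than from maximizing the minimum over $p$.
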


\section{Coloring of Trees}
We will now consider the \mck problem when the input graph is a tree. 
Our main result is a proof that \FF is optimal among deterministic
 as well as fair algorithms. 
We also show that even randomized algorithms that are not fair can
only be slightly better that \FF.
Finally, we show that, for any fixed $k\geq 4$, \FF has a 
 better competitive ratio than \NF.

First, we give a general upper bound for algorithms that are deterministic and/or fair.

\begin{theorem}
\label{detupper}
If $\ALG$ is a deterministic or fair algorithm and $k\geq 2$, then 
\begin{displaymath}
C_{\ALG}^{\text{\sc Tree}}(k)\leq\frac{k-1}{k}.
\end{displaymath}
\end{theorem}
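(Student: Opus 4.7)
The plan is to construct, for each $m \geq 2$, a tree $T_m$ and a reveal order on which every fair algorithm colours at most $m(k-1) + 1$ edges while $\OPT$ colours $mk$, so that the competitive ratio tends to $(k-1)/k$. The tree $T_m$ consists of a spine path $b_1 - b_2 - \cdots - b_m$ together with $k$ additional leaf edges at each $b_i$, for a total of $mk + (m-1)$ edges. The adversary reveals the $m - 1$ spine edges first, in the order $(b_1, b_2), (b_2, b_3), \ldots, (b_{m-1}, b_m)$, and then reveals the $mk$ leaves in any order.

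The analysis will proceed in three steps. First, any fair algorithm must colour every spine edge on arrival: at that moment the endpoints carry at most one colour between them, leaving at least $k - 1 \geq 1$ colours available. Second, the two spine edges incident to each interior $b_i$ must then receive distinct colours, while the single spine edge at each endpoint $b_1, b_m$ uses a single colour; writing $s_i$ for the number of distinct colours appearing on the spine at $b_i$ gives $s_i = 2$ for the $m - 2$ interior vertices and $s_i = 1$ for the two endpoints. Third, when the $k$ leaves at $b_i$ are revealed, fairness compels the algorithm to colour exactly $k - s_i$ of them (one per remaining free colour) and reject the other $s_i$. Summing yields $\ALG(T_m) = (m - 1) + 2(k - 1) + (m - 2)(k - 2) = m(k-1) + 1$. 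On the other hand, $\OPT$ may reject the entire spine and properly colour the $k$ leaves at each $b_i$ with the $k$ distinct colours, giving $\OPT(T_m) = mk$. The ratio $(m(k-1)+1)/(mk)$ then tends to $(k-1)/k$, and because the count $\ALG(T_m)$ above depended only on fairness, this bound applies to every fair, possibly randomized, algorithm.

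The main obstacle is accommodating a deterministic algorithm that is not fair. Such an algorithm could reject the entire spine, freeing every colour slot at each $b_i$ and then colouring all $mk$ leaves to match $\OPT$ exactly on $T_m$. I plan to handle this by making the adversary adaptive: whenever the algorithm rejects a spine edge that it could have coloured, the remainder of the construction is altered so that this rejection becomes costly, e.g.\ by curtailing the leaves at the affected spine vertices, or by pivoting to a long-path input on which any algorithm rejecting a positive fraction of edges already attains a ratio below $(k-1)/k$. The delicate point in this adaptive argument is ensuring that the two complementary input families together certify the $(k-1)/k$ bound simultaneously for every deterministic strategy.
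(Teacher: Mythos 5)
Your construction and counting for \emph{fair} algorithms is correct (and, as you note, it even covers randomized fair algorithms, since the count $m(k-1)+1$ is forced deterministically by fairness). The genuine gap is exactly the case you defer at the end: deterministic algorithms that are not fair. For these the statement is not yet proved, only a plan is sketched, and the sketched remedies do not obviously deliver the bound $\frac{k-1}{k}$. Concretely: if the adversary's fallback is ``stop after the spine whenever too many spine edges were rejected, otherwise give all the leaves,'' the threshold argument only yields a ratio of roughly $1-\frac{k-1}{k^2}$, which is strictly weaker than $\frac{k-1}{k}$; and if instead the adversary ``curtails the leaves,'' e.g.\ gives the $k$ leaves only at spine vertices touched by a coloured spine edge, then a deterministic algorithm that colours every other spine edge achieves a ratio of about $\frac{2k-1}{2k}$ against that adversary, again beating $\frac{k-1}{k}$. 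The structural reason your construction resists such patches is that the spine is revealed in its entirety before any leaves appear and constitutes only a $\Theta(1/k)$ fraction of the edges, so rejecting spine edges is cheap for the algorithm relative to the freedom it buys at the leaf stage; the adversary has no per-vertex leverage left once the whole spine is on the table.

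The paper's proof closes this case by interleaving instead of front-loading: the tree is revealed as a sequence of stars $S_1,\ldots,S_N$, each with $k+1$ edges, where the centre $c_i$ of $S_i$ is placed at the endpoint of a \emph{coloured} edge of $S_{i-1}$ whenever one exists (the adversary can identify such an edge because the algorithm is deterministic, or, if it is fair, because the first $k-1$ edges of a star are necessarily coloured). Then the only way the algorithm can keep the next centre free of colours is to reject \emph{all} $k+1$ edges of the current star, sacrificing at least $k$ edges that $\OPT$ colours in order to gain at most one extra coloured edge in the next star; the counting $(N_0+1)k+(N-2N_0-1)(k-1)=N(k-1)-(k-2)N_0+1\leq N(k-1)+1$ against $\OPT$'s $Nk$ handles fair and unfair deterministic algorithms uniformly. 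To repair your write-up you would need an adaptive mechanism with this ``reject a lot to gain a little'' property — for instance revealing your spine-plus-leaves gadget vertex by vertex, with the next spine edge attached to an already-coloured edge when one exists — rather than the two complementary global input families you propose, whose simultaneous certification is precisely the step that fails.
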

\begin{proof}
The adversary reveals the edges of a tree in $N$ steps, for some large $N \in \mathbb{N}$. The set of edges revealed in the $i$th step constitute a star, $S_i$, with $k+1$ edges and center vertex $c_i$. If at least one edge in $S_{i-1}$ is colored, the adversary chooses $c_i = x$ for some colored edge $(c_{i-1},x)$ in $S_{i-1}$. Otherwise, it chooses $c_i = x$ for an arbitrary edge $(c_{i-1},x)$ in $S_{i-1}$. Note that the adversary is clearly able to identify a colored edge in $S_{i-1}$, if one exists: If \ALG is deterministic, this is trivially true, and if \ALG is fair,  the first $k-1$ edges of $S_{i-1}$ will be colored. 

The algorithm $\ALG$ may color $k$ edges of $S_1$. For all other values of $i$, there are two possibilities:
\begin{itemize}
\item  If $\ALG$ colors even a
single edge of $S_{i-1}$, then it can color at most $k-1$ edges of
$S_i$.
\item Even if $\ALG$ rejects all edges of $S_{i-1}$, then it can color at most $k$ edges of $S_i$. 
\end{itemize}
Let $N_0$ denote the number of stars where \ALG colors no edges.
Then, \ALG colors at most $(N_0+1)k + (N-2N_0-1)(k-1) = N(k-1) - (k-2)N_0 +1 \leq N(k-1)+1$ edges.
On the other hand, in each star, \OPT colors the $k$ edges not incident to other stars, in total $Nk$ edges. Since $N$ can be arbitrarily large, this shows that the competitive ratio of $\ALG$ is at most $\frac{k-1}{k}$.
\qed\end{proof}

Using the charging technique of Section~\ref{lowertech}, we will show that
 Theorem~\ref{detupper} is tight by proving a matching lower bound for \FF.
To this end, we introduce some terminology related to deterministic
 algorithms.

Let $\ALG$ be a deterministic algorithm for \mck, let $G=(V,E)$ be
 a graph, and suppose that $\ALG$ has been given the edges of $G$ in
 some order.
Recall that, since \ALG is deterministic, $E_+$ denotes the set of
 edges colored by \ALG, and $E_-$ denotes the set of edges
 colored by \OPT only.
We partition $E_+$ into the set, \ed, of edges colored by both \ALG
 and \OPT ({\em double colored} edges) and the set, \es, of edges 
 colored by \ALG only ({\em single colored} edges).
Thus, $E_{\OPT}=\er\cup\ed$. 
For $x\in V$, let $\ec(x)$ be the edges in $\ec$ incident to 
$x$ and let $\dc(x)=\ab{\ec(x)}$. Define $\er(x), \ed(x), \es(x), \dr(x), \dd(x)$ and $\ds(x)$ similarly.

\begin{theorem}
\label{fftree}
For $k\geq 2$, \FF is an optimal deterministic algorithm for \mctree with
$$C_{\FFm}^{\text{\sc Tree}}(k)=\frac{k-1}{k}.$$
\end{theorem}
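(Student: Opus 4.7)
The upper bound is immediate from Theorem~\ref{detupper}, so the task is to prove the matching lower bound $C_{\FFm}^{\text{\sc Tree}}(k) \geq \frac{k-1}{k}$. I would apply the charging technique of Section~\ref{lowertech} with $C = \frac{k-1}{k}$: each $f \in \ed$ has initial surplus $\frac{1}{k}$ and each $f \in \es$ has initial surplus $1$, and I need to redistribute the surplus so that every $e \in \er$ ends up with final value at least $\frac{k-1}{k}$. Equivalently, it suffices to prove the global inequality $k|\es| + |\ed| \geq (k-1)|\er|$, which is just $|\ec| \geq \frac{k-1}{k}|E_{\OPT}|$ rewritten via $|\ec|=|\ed|+|\es|$ and $|E_{\OPT}|=|\ed|+|\er|$.

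The structural fact I would lean on is a standard consequence of fairness: for every $e = (u, v) \in \er$, at the moment $e$ is presented all $k$ colors already appear on edges of $\ec$ incident to $u$ or $v$. Writing $A_u^e$ and $A_v^e$ for these color sets, $A_u^e \cup A_v^e = \colorset_{1,k}$, so at least $k$ edges of $\ec$ are adjacent to $e$ at arrival time, each color occurring on one or two of them. A first attempt is to distribute the demand $\frac{k-1}{k}$ of each $e$ uniformly across the $k$ colors, sending $\frac{k-1}{k^2}$ from the adjacent color-$c$ edge(s) of $\ec$ (split between the $u$- and $v$-side if both carry $c$). The main obstacle is the outflow constraint at edges in $\ed$: an $f = (x, y) \in \ed$ has surplus only $\frac{1}{k}$, while the uniform scheme would allow outflow up to $(\dr(x) + \dr(y)) \cdot \frac{k-1}{k^2}$, which exceeds $\frac{1}{k}$ already for $k \geq 3$.

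To overcome this, I would exploit the tree structure to re-balance the scheme so that demand travels preferentially toward edges of $\es$ (whose surplus is $1$ per edge). The cleanest formulation is as a flow/LP feasibility statement: by LP duality, it suffices to check that for every $S \subseteq \er$,
\[
(k-1)|S| \;\leq\; |N(S) \cap \ed| + k\,|N(S) \cap \es|,
\]
where $N(S)$ denotes the edges of $\ec$ sharing an endpoint with some edge of $S$. Summing the OPT constraint $\dd(v)+\dr(v)\leq k$ and the FF constraint $\dd(v)+\ds(v)\leq k$ over the vertices spanned by $S$, and using that $S$ induces a forest in the underlying tree to turn a vertex-degree sum into an edge count, I expect the feasibility condition to drop out. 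The delicate step will be handling edges of $\er$ incident to a tree leaf: such an edge contributes no supply on its leaf side and must route its entire demand through its non-leaf endpoint, which is precisely where the structural fact forces all $k$ colors to be present and so guarantees enough adjacent surplus in $\ec$.
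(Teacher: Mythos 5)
Your reduction of the competitive-ratio bound to the global inequality $|\ed|+k|\es|\geq(k-1)|\er|$ is fine, and the LP/flow reformulation is a reasonable framework, but the Hall condition you propose --- with $N(S)$ consisting of the edges of \ec \emph{adjacent} to $S$ --- is false, so the flow you want to certify does not exist. Concretely, take $k=9$ and build the following tree, presenting the edges so that \FF produces the indicated coloring: a center $v$ with six edges colored $1,\dots,6$; two vertices $w_1,w_2$, each the center of a star with three edges colored $7,8,9$ (to force these high colors, each leaf-partner $z$ of $w_j$ first receives six pendant edges colored $1,\dots,6$, after which \FF must give $(w_j,z)$ the next high color); and finally the edges $(v,w_1)$ and $(v,w_2)$, which \FF rejects because all nine colors appear at their endpoints. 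The maximum degree is at most $9$, so \OPT colors every edge; hence every \FF-colored edge lies in \ed with surplus $\frac19$, and $S=\{(v,w_1),(v,w_2)\}\subseteq\er$. The colored edges adjacent to $S$ number $6+3+3=12$, giving $|N(S)\cap\ed|+k\,|N(S)\cap\es|=12<16=(k-1)|S|$. The surplus that rescues these two rejected edges sits on the $36$ pendant edges at the vertices $z$, which are at distance two from $S$; any correct charging scheme must reach them, as the paper's scheme does via its Steps 1(a) and 2.

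This exposes the deeper problem: every ingredient in your sketch --- fairness (all $k$ colors present at a rejected edge), the constraints $\dd(v)+\dr(v)\leq k$ and $\dd(v)+\ds(v)\leq k$, and the tree structure --- holds equally for \NF, yet Theorem~\ref{upperNFtrees} shows $C_{\NFm}^{\text{\sc Tree}}(k)=\frac{2\sqrt{k}-2}{2\sqrt{k}-1}<\frac{k-1}{k}$. So no argument built only from these facts can yield the claimed bound. The missing, \FF-specific lemma is the paper's Fact~4: if \FF assigns an edge $e=(v,u)$ a color exceeding some color still available at $v$, then all of those lower available colors must already appear at $u$, so $e$ has at least $k-\dc(v)$ lower-colored child edges. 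This is precisely what licenses routing surplus from grandchild edges through $u$ and $v$ to a rejected sibling of $e$, and it is exactly what your adjacency-restricted LP cannot see. Merely enlarging $N(S)$ to a distance-two neighbourhood would not suffice either without such a lemma, since for \NF the analogous configuration can be realized with no colored edges at distance two at all.
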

\def\cmis{\widehat{c}_x}
\def\cmisv{\widehat{c}_v}

\begin{proof}
Fix a tree $T=(V,E)$ and assume that the edges of $E$ have been revealed to \FF in some order. For the analysis, we will view $T$ as a rooted tree by choosing an arbitrary vertex to be the root. When writing $e=(x,y)\in E$, we imply that $x$ is the parent vertex of $y$. 

Following Section~\ref{lowertech}, we set $C=\frac{k-1}{k}$. An edge in $\ed$ then has a surplus of $1-C=\frac{1}{k}$ and an edge in $\es$ has a surplus of $1$. On the other hand, an edge in $\er$ has an initial value of zero. 

We will define a strategy to distribute the total positive surplus obtained by \FF among the edges in $\er$ such that each edge gets a final value of at least $C$. For ease of presentation, the strategy will be described in a stepwise manner (see Fig.~\ref{surplusfigure} for an illustration of how the strategy works):

\begin{enumerate}[Step 1:]
\item  Consider in turn all edges $e=(v,u)\in\ec$. Let $c$ be the color assigned to $e$ by \FF and let $e'=(w,v)$ be the parent edge of $e$ (if it exists). 
  \begin{enumerate}
  \item If $e'\in \ed$ and $e'$ has been colored with a color $c'>c$,
    then $e$ transfers a value of $\frac1k$ to $w$.
  \item Any surplus remaining at $e$ is
    transferred to $v$.
  \end{enumerate}
  For each vertex $v$, let $m(v)$ denote the value transferred
   to $v$ in this step.
\item Consider in turn all vertices $v\in V$. 
  \begin{enumerate}
  \item If the vertex $v$ has a parent edge $e'\in\er$, then $v$ transfers a
    value of $\min\left\{m(v), \frac{k-1}{k}\right\}$ to $e'$. 
  \item Any value remaining at $v$ is distributed equally among the
    child edges of $v$ belonging to $\er$.
  \end{enumerate}
  For each edge $e$, let $m_v(e)$ denote the value transferred from $v$ to $e$ in this step.
\end{enumerate}

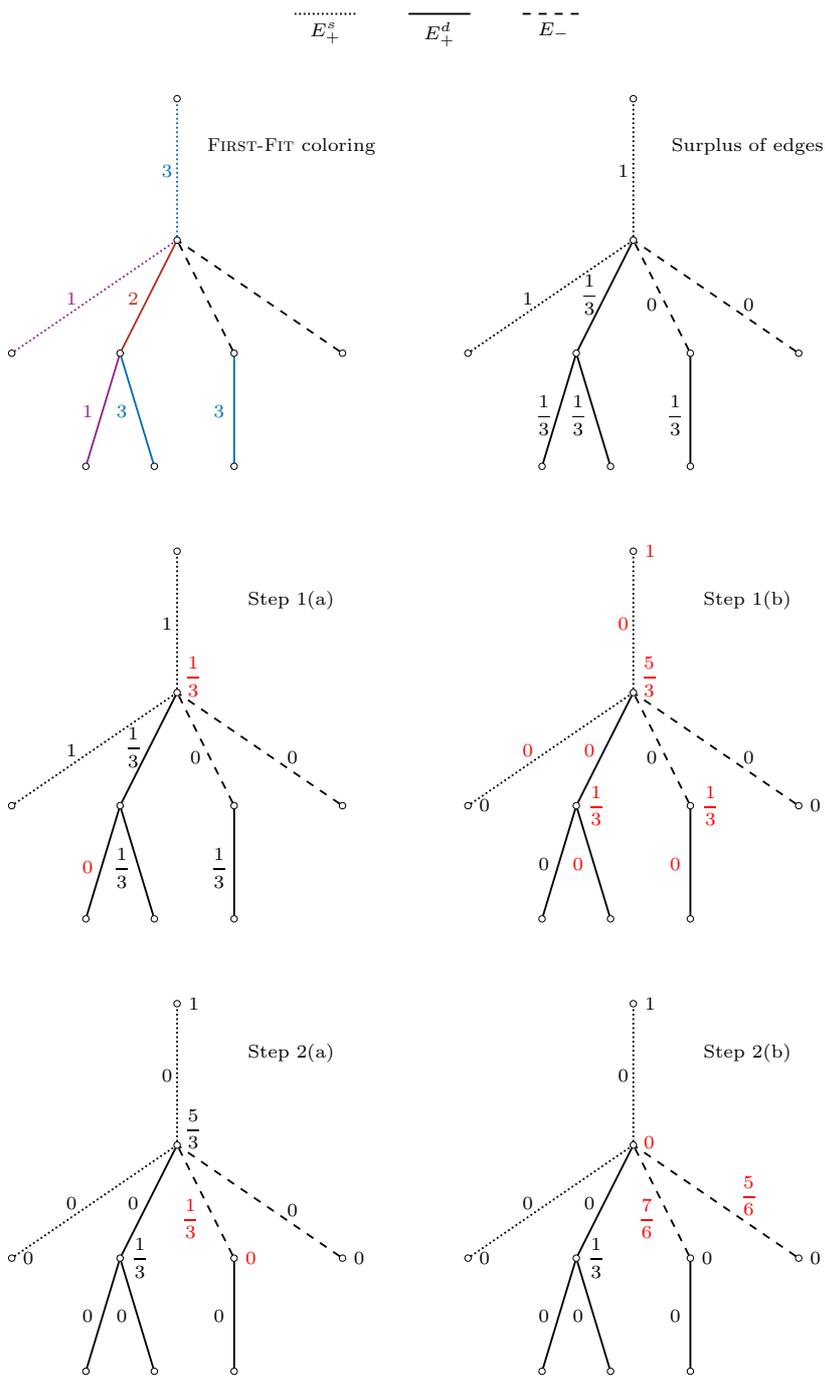
\begin{figure}
\begin{tikzpicture}[font=\scriptsize, scale=1.5]

	\begin{pgfonlayer}{nodelayer}

 \node [style=rn1, label={below:}] (le1) at (1,2) {};
 \node [style=rn1, label={below:}] (le2) at (1.6,2) {};
 \draw [style=sed] (le1) to node[label={below:$E_+^s$}]{} (le2);

 \node [style=rn1, label={below:}] (le1) at (2,2) {};
 \node [style=rn1, label={below:}] (le2) at (2.6,2) {};
 \draw [style=ed] (le1) to node[label={below:$E_+^d$}]{} (le2);

 \node [style=rn1, label={below:}] (le1) at (3,2) {};
 \node [style=rn1, label={below:}] (le2) at (3.6,2) {};
 \draw [style=ned] (le1) to node[label={below:$E_-$}]{} (le2);

 \node [style=rn1, label={below:\FF coloring}] (la1) at (1,1) {};
%\node [style=rn] (a1) at (1.3,0.5) {};
%\node [style=rn] (a1) at (2.8,0.5) {};	
	\node [style=rn,] (0) at (0, -0) {};
		\node [style=rn] (1) at (0, 1.25) {};
		\node [style=rn] (2) at (1.45, -1) {};
		\node [style=rn] (3) at (0.5, -1) {};
                \node [style=rn] (3a) at (0.5, -2) {};
		\node [style=rn] (4a) at (-0.8, -2) {};
		\node [style=rn] (4b) at (-0.2, -2) {};
                \node [style=rn] (4) at (-0.5, -1) {};
		\node [style=rn] (5) at (-1.45, -1) {};

	\end{pgfonlayer}
	\begin{pgfonlayer}{edgelayer}
		\draw [style=sed, color=NavyBlue] (0) to node[label={[xshift=-0.13cm, yshift=-0.21cm]$3$}]{}  (1);
		\draw [style=ed, color=BrickRed] (4) to node[label={[xshift=-0.2cm, yshift=-0.21cm]$2$}]{} (0);
		\draw [style=sed, color=Plum] (5) to node[label={[xshift=-0.3cm, yshift=-0.21cm, color=Plum]$1$}]{} (0);
		\draw [style=ned,  ] (3) to node[label={[xshift=-0.13cm, yshift=-0.3cm,color=red]}]{} (0);
		\draw [style=ned,  ] (2) to node[label={[xshift=0.43cm, yshift=-0.3cm,color=red]}]{} (0);

                \draw [style=ed, color=Plum] (4) to node[label={[xshift=-0.2cm, yshift=-0.21cm, color=Plum]$1$}]{} (4a);
                \draw [style=ed, color=NavyBlue] (4) to node[label={[xshift=-0.2cm, yshift=-0.21cm]$3$}]{} (4b);
                \draw [style=ed, color=NavyBlue] (3) to node[label={[xshift=-0.2cm, yshift=-0.21cm]$3$}]{} (3a);
	
	\end{pgfonlayer}

\begin{scope}[shift={(4,0)}]
	\begin{pgfonlayer}{nodelayer}

 \node [style=rn1, label={below:{Surplus of edges}}] (la1) at (1,1) {};
		\node [style=rn, ] (0) at (0, -0) {};
		\node [style=rn,] (1) at (0, 1.25) {};
		\node [style=rn,] (2) at (1.45, -1) {};
		\node [style=rn,] (3) at (0.5, -1) {};
                \node [style=rn,] (3a) at (0.5, -2) {};

		\node [style=rn] (4a) at (-0.8, -2) {};
		\node [style=rn] (4b) at (-0.2, -2) {};
                \node [style=rn,] (4) at (-0.5, -1) {};
		\node [style=rn,] (5) at (-1.45, -1) {};

	\end{pgfonlayer}
	\begin{pgfonlayer}{edgelayer}
		\draw [style=sed] (0) to node[label={[xshift=-0.13cm, yshift=-0.21cm,color=black]$1$}]{}  (1);
		\draw [style=ed] (4) to node[label={[xshift=-0.2cm, yshift=-0.35cm,color=black]$\dfrac13$}]{} (0);
		\draw [style=sed] (5) to node[label={[xshift=-0.3cm, yshift=-0.21cm,color=black]$1$}]{} (0);
		\draw [style=ned,  ] (3) to node[label={[xshift=-0.13cm, yshift=-0.3cm,color=black]$0$}]{} (0);
		\draw [style=ned,  ] (2) to node[label={[xshift=0.43cm, yshift=-0.3cm,color=black]$0$}]{} (0);

                \draw [style=ed] (4) to node[label={[xshift=-0.2cm, yshift=-0.45cm,color=black]$\dfrac13$}]{} (4a);
                \draw [style=ed] (4) to node[label={[xshift=-0.2cm, yshift=-0.45cm,color=black]$\dfrac13$}]{} (4b);
                \draw [style=ed] (3) to node[label={[xshift=-0.2cm, yshift=-0.45cm,color=black]$\dfrac13$}]{} (3a);
	
	\end{pgfonlayer}
\end{scope}

\begin{scope}[shift={(0,-4)}]
	\begin{pgfonlayer}{nodelayer}
                \node [style=rn1, label={below:Step 1(a)}] (la1) at (1,1) {};
		\node [style=rn, label={[xshift=0.21cm, yshift=-0.2cm,color=red]$\dfrac13$}] (0) at (0, -0) {};
		\node [style=rn,] (1) at (0, 1.25) {};
		\node [style=rn,] (2) at (1.45, -1) {};
		\node [style=rn,] (3) at (0.5, -1) {};
                \node [style=rn, label=right:] (3a) at (0.5, -2) {};
		\node [style=rn] (4a) at (-0.8, -2) {};
		\node [style=rn] (4b) at (-0.2, -2) {};
                \node [style=rn,] (4) at (-0.5, -1) {};
		\node [style=rn,] (5) at (-1.45, -1) {};
	\end{pgfonlayer}
	\begin{pgfonlayer}{edgelayer}
		\draw [style=sed] (0) to node[label={[xshift=-0.13cm, yshift=-0.21cm,color=black]$1$}]{}  (1);
		\draw [style=ed] (4) to node[label={[xshift=-0.2cm, yshift=-0.35cm,color=black]$\dfrac13$}]{} (0);
		\draw [style=sed] (5) to node[label={[xshift=-0.3cm, yshift=-0.21cm,color=black]$1$}]{} (0);

				\draw [style=ned,  ] (3) to node[label={[xshift=-0.13cm, yshift=-0.3cm,color=black]$0$}]{} (0);
		\draw [style=ned,  ] (2) to node[label={[xshift=0.43cm, yshift=-0.3cm,color=black]$0$}]{} (0);

                \draw [style=ed] (4) to node[label={[xshift=-0.2cm, yshift=-0.25cm,color=red]$0$}]{} (4a);
                \draw [style=ed] (4) to node[label={[xshift=-0.2cm, yshift=-0.45cm,color=black]$\dfrac13$}]{} (4b);
                \draw [style=ed] (3) to node[label={[xshift=-0.2cm, yshift=-0.45cm,color=black]$\dfrac13$}]{} (3a);
	
	\end{pgfonlayer}
\end{scope}

\begin{scope}[shift={(4,-4)}]
	\begin{pgfonlayer}{nodelayer}
           \node [style=rn1, label={below:Step 1(b)}] (la1) at (1,1) {};
	\node [style=rn, label={[xshift=0.21cm, yshift=-0.2cm,color=red]$\dfrac53$}] (0) at (0, -0) {};
		\node [style=rn, label={right:{\color{red}$1$}}] (1) at (0, 1.25) {};
		\node [style=rn, label={right:$0$}] (2) at (1.45, -1) {};
		\node [style=rn, label={right:{\color{red}$\dfrac13$}}] (3) at (0.5, -1) {};
                \node [style=rn, label=right:] (3a) at (0.5, -2) {};
		\node [style=rn] (4a) at (-0.8, -2) {};
		\node [style=rn] (4b) at (-0.2, -2) {};
                \node [style=rn, label={right:{\color{red}$\dfrac13$}}] (4) at (-0.5, -1) {};
		\node [style=rn, label={right:${\color{black}0}$}] (5) at (-1.45, -1) {};
	\end{pgfonlayer}
	\begin{pgfonlayer}{edgelayer}
		\draw [style=sed] (0) to node[label={[xshift=-0.13cm, yshift=-0.21cm,color=red]$0$}]{}  (1);
		\draw [style=ed] (4) to node[label={[xshift=-0.2cm, yshift=-0.21cm,color=red]$0$}]{} (0);
		\draw [style=sed] (5) to node[label={[xshift=-0.3cm, yshift=-0.21cm,color=red]$0$}]{} (0);

				\draw [style=ned,  ] (3) to node[label={[xshift=-0.13cm, yshift=-0.3cm,color=black]$0$}]{} (0);
		\draw [style=ned,  ] (2) to node[label={[xshift=0.43cm, yshift=-0.3cm,color=black]$0$}]{} (0);

                \draw [style=ed] (4) to node[label={[xshift=-0.2cm, yshift=-0.21cm,color=black]$0$}]{} (4a);
                \draw [style=ed] (4) to node[label={[xshift=-0.2cm, yshift=-0.21cm,color=red]$0$}]{} (4b);
                \draw [style=ed] (3) to node[label={[xshift=-0.2cm, yshift=-0.21cm,color=red]$0$}]{} (3a);
	
	\end{pgfonlayer}
\end{scope}

\begin{scope}[shift={(0,-8)}]
	\begin{pgfonlayer}{nodelayer}
           \node [style=rn1, label={below:Step 2(a)}] (la1) at (1,1) {};
           	\node [style=rn, label={[xshift=0.21cm, yshift=-0.2cm,color=black]$\dfrac53$}] (0) at (0, -0) {};
		\node [style=rn, label={right:{\color{black}$1$}}] (1) at (0, 1.25) {};
		\node [style=rn, label={right:$0$}] (2) at (1.45, -1) {};
		\node [style=rn, label={right:{\color{red}$0$}}] (3) at (0.5, -1) {};
                \node [style=rn, label=right:] (3a) at (0.5, -2) {};
		\node [style=rn] (4a) at (-0.8, -2) {};
		\node [style=rn] (4b) at (-0.2, -2) {};
                \node [style=rn, label={right:{\color{black}$\dfrac13$}}] (4) at (-0.5, -1) {};
		\node [style=rn, label={right:${\color{black}0}$}] (5) at (-1.45, -1) {};
	\end{pgfonlayer}
	\begin{pgfonlayer}{edgelayer}
		\draw [style=sed] (0) to node[label={[xshift=-0.13cm, yshift=-0.21cm,color=black]$0$}]{}  (1);
		\draw [style=ed] (4) to node[label={[xshift=-0.2cm, yshift=-0.21cm,color=black]$0$}]{} (0);
		\draw [style=sed] (5) to node[label={[xshift=-0.3cm, yshift=-0.21cm,color=black]$0$}]{} (0);

		\draw [style=ned,] (3) to node[label={[xshift=-0.2cm, yshift=-0.6cm,color=red]$\dfrac13$}]{} (0);
		\draw [style=ned,  ] (2) to node[label={[xshift=0.43cm, yshift=-0.3cm,color=black]$0$}]{} (0);

                \draw [style=ed] (4) to node[label={[xshift=-0.2cm, yshift=-0.21cm,color=black]$0$}]{} (4a);
                \draw [style=ed] (4) to node[label={[xshift=-0.2cm, yshift=-0.21cm,color=black]$0$}]{} (4b);
                \draw [style=ed] (3) to node[label={[xshift=-0.2cm, yshift=-0.21cm,color=black]$0$}]{} (3a);
	
	\end{pgfonlayer}
\end{scope}

\begin{scope}[shift={(4,-8)}]
	\begin{pgfonlayer}{nodelayer}
           \node [style=rn1, label={below:Step 2(b)}] (la1) at (1,1) {};
		\node [style=rn, label={[xshift=0.21cm, yshift=-0.2cm,color=red]$0$}] (0) at (0, -0) {};
		\node [style=rn, label={right:$1$}] (1) at (0, 1.25) {};
		\node [style=rn, label={right:$0$}] (2) at (1.45, -1) {};
		\node [style=rn, label={right:$0$}] (3) at (0.5, -1) {};
                \node [style=rn, label=right:] (3a) at (0.5, -2) {};
		\node [style=rn] (4a) at (-0.8, -2) {};
		\node [style=rn] (4b) at (-0.2, -2) {};
                \node [style=rn, label={right:$\dfrac13$}] (4) at (-0.5, -1) {};
		\node [style=rn, label={right:$0$}] (5) at (-1.45, -1) {};
	\end{pgfonlayer}
	\begin{pgfonlayer}{edgelayer}
		\draw [style=sed] (0) to node[label={[xshift=-0.13cm, yshift=-0.21cm,color=black]$0$}]{}  (1);
		\draw [style=ed] (4) to node[label={[xshift=-0.2cm, yshift=-0.21cm,color=black]$0$}]{} (0);
		\draw [style=sed] (5) to node[label={[xshift=-0.3cm, yshift=-0.21cm,color=black]$0$}]{} (0);

		\draw [style=ned,] (3) to node[label={[xshift=-0.2cm, yshift=-0.6cm,color=red]$\dfrac76$}]{} (0);
		\draw [style=ned,] (2) to node[label={[xshift=0.43cm, yshift=-0.3cm,color=red]$\dfrac56$}]{} (0);

                \draw [style=ed] (4) to node[label={[xshift=-0.2cm, yshift=-0.21cm,color=black]$0$}]{} (4a);
                \draw [style=ed] (4) to node[label={[xshift=-0.2cm, yshift=-0.21cm,color=black]$0$}]{} (4b);
                \draw [style=ed] (3) to node[label={[xshift=-0.2cm, yshift=-0.21cm,color=black]$0$}]{} (3a);
	
	\end{pgfonlayer}
\end{scope}
\end{tikzpicture}
\caption{Illustration of the steps of the strategy defined in the proof of Theorem~\ref{fftree}. In this example, the number of colors is $k=3$.}
\label{surplusfigure}
\end{figure}

The following simple but useful properties of the strategy defined
above will be used to prove the theorem.
Each of the four facts gives a lower bound on the value transferred
 from an edge $e_+ = (v,u) \in E_+$ to its parent vertex, $v$.
Let $c$ denote the color of $e_+$.
We first state the four facts and then give short proofs.

Let $e'=(w,v)$ be the parent edge of $e_+$ (if it exists). 
If $e' \in E_+$, let $c'$ denote the color of $e'$.

\begin{enumerate}[Fact 1:]
\item[Fact 1:]
{Assume that $e_+ \in \es$. \\
 If $e'\notin \ed$, $e'$ does not exist, or $c \geq c'$, then $e_+$ contributes
  a value of $1$ to $m(v)$.\\ 
 If $e'\in \ed$, then $e_+$ contributes a value of at least
  $\frac{k-1}{k}$ to $m(v)$.
 }
\item[Fact 2:]
{If $e' \not\in\ec$ or $e'$ does not exist, then $m(v)\geq
  \frac{c}{k}$. 
 } 
\item[Fact 3:]
{Assume that $e_+\in\ed$.\\
 If $e'\notin \ed$, then $e_+$ contributes a value of $\frac1k$ to
  $m(v)$.
 }
\end{enumerate}

In order to state the next fact, we need to introduce some new
terminology. For $v\in V$, let $\cmisv
=\max\left(\overline{\colorset_v}\cup\{0\}\right)$. That is, $\cmisv$ is the
largest color available at $v$ (and $\cmisv=0$ if no colors are
available). If an edge incident to $v$ is colored with a
color $c > \cmisv$, the edge is said to be a \emph{high-colored} edge (with respect to $v$). There must be exactly $k-\cmisv$ high-colored edges incident to $v$.

\begin{enumerate}[Fact 4:]
\item[Fact 4:]
{Assume that $e_+\in \ed$.\\
 If $e_+$ is high-colored with respect to $v$, then the colored child
  edges of $e_+$ contribute a total value of at least
  $\frac{k-\dc(v)}{k}$ to $m(v)$.
 }
\end{enumerate}

{\em Proof of Fact 1:}
If $e'\notin \ed$, $e'$ does not exists, or $c \geq c'$, then $e_+$ transfers a value of $1$ to $v$ in Step 1(b). If $e'\in\ed$, then $e_+$ transfers a value of at most $\frac1k$ to $w$ in Step 1(a) and hence $e_+$ transfers a value of at least $\frac{k-1}{k}$ to $v$ in Step 1(b).

{\em 
Proof of Fact 2:}
If $e_+\in \es$, this follows from Fact~1. Otherwise, note that by the definition of \FF, it must hold that $\colorset_{1,c}\subseteq \colorset_v\cup \colorset_u$. In Step 1, the edges incident to $v$ and $u$ colored with a color in $\colorset_{1,c}$ each transfer a value of at least $\frac1k$ to $v$.

{\em Proof of Fact 3:} This follows, since $e_+$ does not transfer any value to $w$ in Step 1(a).

{\em Proof of Fact 4:}
Since $e_+$ is high-colored, it follows from the definition of \FF that
 all colors in $\overline{\colorset_v}$ are represented at child edges of $u$.
Thus, $e_+$ has at least $\ab{\overline{\colorset_v}}=k-\dc(v)$ child
 edges with lower colors than the color of $e_+$.
Since $e_+ \in \ed$, each of these child edges transfers a value of $\frac1k$
 to $v$ in Step 1(a).

We will combine these facts to show that any edge $e=(x,y)\in
 \er$ gets a final value of at least $\frac{k-1}{k}$. 

If $\colorset_x=\colorset_{1,k}$, then $\cmis=0$.
Otherwise, $\cmis \in \colorset_y$, since \FF is fair.
Hence, Fact~2 implies that $m(y)\geq\frac{\cmis}{k}$. 
Thus, $e$ receives a value of at least $\min \{ \frac{k-1}{k},
 \frac{\cmis}{k} \}$ from $y$. 
In particular, we will assume that $\cmis<k-1$, since otherwise we are
 done. 
Thus, 
\begin{equation*}
m_y(e) \geq \frac{\cmis}{k}
\end{equation*}
We will now turn to proving that $m_x(e) \geq \frac{k-\cmis-1}{k}$.
This will finish the proof, since it
 means that $e$ gets a final value of
 $m_x(e) + m_y(e) \geq \frac{k-\cmis-1}{k} + \frac{\cmis}{k} = \frac{k-1}{k}$.

Let $e'=(z,x)$ be the parent edge of $x$ (if it exists). The rest of the proof is split into three cases depending on which of the sets $\ed$, $\es$, and $\er$ (if any) that contains $e'$.

\paragraph{Case 1: $e'\in \ed$.}
Recall that there are $k-\cmis$ high-colored edges incident to $x$.
Thus, $x$ has at least $k-\cmis-1$ high-colored child edges, and at
 least $k-\cmis-1-\ds(x)$ of them belong to $\ed$.
By Fact~4, $x$ receives a value of at least $\frac{k-\dc(x)}{k}$ from
 the child edges of each of these at least $k-\cmis-1-\ds(x)$ edges.
Moreover, by Fact~1, each of the $\ds(x)$ child edges of $e'$
 belonging to $\es$ contributes a value of $\frac{k-1}{k}$ to $m(x)$.
Thus, 
\begin{align*}
m(x)
& \geq (k-\cmis-1-\ds(x))\frac{k-\dc(x)}{k} + \ds(x)\frac{k-1}{k} \\
& =    (k-\cmis-1-\ds(x))\frac{k-\dc(x)}{k} +
       \ds(x) \left( \frac{k-\dc(x)}{k} + \frac{\dc(x)-1}{k} \right) \\
& =    (k-\cmis-1)\frac{k-\dc(x)}{k} + \ds(x)\frac{\dc(x)-1}{k} \\
& \geq (k-\cmis-1)\frac{k-\dc(x)}{k} + \ds(x)\frac{k-\cmis-1}{k} \\
& =    (k-\dc(x)+\ds(x))\frac{k-\cmis-1}{k}\\
& =    (k-\dd(x))\frac{k-\cmis-1}{k}\\
& \geq \dr(x)\frac{k-\cmis-1}{k}
\end{align*}
Hence, since no value is transferred from $x$ to $e'$ in Step 2(a), each child edge of $x$ belonging to $E_-$ receives a value of at least $\frac{k-\cmis-1}{k}$ from $x$ in Step 2(b).
In particular, 
\begin{equation*}
m_x(e) \geq \frac{k-\cmis-1}{k}
\end{equation*}

\paragraph{Case 2: $e'\in\es$ or $e'$ does not exist.}
In this case, since $e' \not\in \ed$, $x$ has at least
 $k-\cmis-\ds(x)$ high-colored child edges belonging to $\ed$.
By Fact~4, $x$ receives a value of at least $\frac{k-\dc(x)}{k}$ from
 the child edges of each of these edges.
Note that this value comes solely from child edges of $x$'s
high-colored child edges, not from the high-colored edges
 themselves. 
Therefore, by Fact~3, there is also a contribution of
 $\frac1k$ from each of $x$'s child edges belonging to $\ed$.
Finally, there are at least $\ds(x)-1$ child edges of $x$ belonging to $\es$ (if $e'$ exists, there are $\ds(x)-1$ such edges, and otherwise there are $\ds(x)$ such edges). By Fact~1, each of these edges transfers a value of 1 to $x$. Thus, 
\begin{alignat*}{2}
m(x)
& \geq \: && (k-\cmis-\ds(x))\frac{k-\dc(x)}{k} + (\ds(x)-1) + \frac{\dd(x)}{k}\\
& = && (k-\cmis-\ds(x))\frac{k-\dc(x)}{k} +
       (\ds(x)-1) \left( \frac{k-\dc(x)}{k} + \frac{\dc(x)}{k} \right)
       + \frac{\dd(x)}{k} \\ 
& = && (k-\cmis) \frac{k-\dc(x)}{k} - \ds(x) \frac{k-\dc(x)}{k}
       + \ds(x)\frac{k-\dc(x)}{k} - \frac{k-\dc(x)}{k} \\
&   && + (\ds(x)-1)\frac{\dc(x)}{k} + \frac{\dd(x)}{k} \\
& = && (k-\cmis-1)\frac{k-\dc(x)}{k} + (\ds(x)-1)\frac{\dc(x)}{k}
       + \frac{\dd(x)}{k} \\ 
& = && (k-\cmis-1)\frac{k-\dc(x)}{k} + (\ds(x)-1)\frac{\dc(x)-1}{k} 
       + \frac{\ds(x)-1+\dd(x)}{k} \\ 
& = && (k-\cmis-1)\frac{k-\dc(x)}{k} + (\ds(x)-1)\frac{\dc(x)-1}{k} 
       + \frac{\dc(x)-1}{k} \\
& = && (k-\cmis-1)\frac{k-\dc(x)}{k} + \ds(x)\frac{\dc(x)-1}{k}\\
& \geq && \dr(x)\frac{k-\cmis-1}{k}, \text{ as in Case 1}
\end{alignat*}
Hence, since no value is transferred from $x$ to $e'$ in Step 2(a), each
 child edge of $x$ belonging to $E_-$ receives a value of at least $\frac{k-\cmis-1}{k}$ from $x$ in Step 2(b).
Thus,
\begin{equation*}
m_x(e) \geq \frac{k-\cmis-1}{k}
\end{equation*}

\paragraph{Case 3: $e'\in\er$.}
The only difference to Case 2 is that $x$ has exactly $\ds(x)$ child
 edges belonging to $\es$.
Thus,
\begin{align*}
m(x)
& \geq (k-\cmis-\ds(x))\frac{k-\dc(x)}{k} + \ds(x) + \frac{\dd(x)}{k}\\
& \geq \dr(x)\frac{k-\cmis-1}{k} +1, 
       \text{ using the same calculations as in Case 2} 
\end{align*}
Hence, since the value transferred from $x$ to $e'$ is smaller than 1,
 each child edge of $x$ belonging to $E_-$ receives a value larger than
 $\frac{k-\cmis-1}{k}$ from $x$.
Thus, again,
\begin{equation*}
m_x(e) \geq \frac{k-\cmis-1}{k}
\end{equation*}
\mbox{}\qed\end{proof}

By Theorems~\ref{detupper} and \ref{fftree}, an algorithm for \mctree
 can only be better than \FF, if it is both randomized and unfair.
However, the next result shows that even such algorithms cannot do much better than \FF.

\begin{theorem}
If $\RAND$ is a (possibly randomized) algorithm for \mck and $k\geq 2$, then
\begin{displaymath}
C_{\RAND}^{\text{\sc Tree}}(k)\leq\frac{k}{k+1}.
\end{displaymath} 
\end{theorem}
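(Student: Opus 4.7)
The plan is to apply Yao's minimax principle: I will exhibit a probability distribution $\dist$ over tree instances under which every deterministic algorithm $\DET$ achieves expected competitive ratio at most $\tfrac{k}{k+1}$. The natural starting point is a randomized version of the adversarial construction from Theorem~\ref{detupper}. The adversary reveals a long sequence of $N$ stars $S_1,\ldots,S_N$, each with $k+1$ edges, where for $i\ge 2$ the center $c_i$ of $S_i$ is a leaf of $S_{i-1}$ drawn from some distribution $\mu_i$ that is fixed in advance. In Theorem~\ref{detupper} this leaf was chosen adaptively to be one incident to an edge that $\ALG$ has colored, which forces an additional rejection in $S_i$; against a randomized algorithm this adaptive choice is not available, so $\mu_i$ must be specified without reference to $\ALG$'s random bits.

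For any deterministic $\DET$, the plan is to set up a recurrence for $a_i$, the expected number of edges of $S_i$ that $\DET$ colors. The key observation is that whenever the edge $(c_{i-1},c_i)$ has been colored, the center $c_i$ already has one color in use when $S_i$ is revealed, so $\DET$ can color at most $k-1$ of the $k+1$ edges of $S_i$; otherwise at most $k$. If the conditional probability that $(c_{i-1},c_i)$ is colored can be kept at $a_{i-1}/(k+1)$, then $a_{i+1}\le k - a_i/(k+1)$. On the $\opt$ side, rejecting the $N-1$ chain edges $(c_i,c_{i+1})$ together with one extra edge at $c_N$ yields a proper $k$-coloring of the remaining $Nk$ edges, so $\opt(\sigma)\ge Nk$; the claimed ratio then follows from a fixed-point analysis of the recurrence.

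The principal obstacle is making this bound sharp. The simplest candidate --- drawing $c_i$ uniformly at random from the $k+1$ leaves of $S_{i-1}$ --- yields only the weaker fixed point $k(k+1)/(k+2)$ and ratio $(k+1)/(k+2)>k/(k+1)$. To drive the recurrence down to the sharp value $k^2/(k+1)$, the construction must be enriched: one natural possibility is to reveal, at each step, additional stars attached to several candidate next-centers so that $\DET$ must commit to the colorings of several ``possible futures'', only one of which is probed by the random choice of which branch is continued. Verifying that the resulting distribution, regardless of how $\DET$ splits its coloring across these candidates, forces an expected $1/(k+1)$ fraction of $\opt$'s edges to be rejected --- and hence ruling out that any randomized algorithm can exploit the structure to do better than $k/(k+1)$ --- is the main technical step of the proof.
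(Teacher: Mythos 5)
There is a genuine gap here: your proof is not complete, and you say so yourself. The only distribution you actually analyze (choosing the next center uniformly among the $k+1$ leaves of $S_{i-1}$) yields the fixed point $k(k+1)/(k+2)$ and hence only the weaker bound $(k+1)/(k+2) > k/(k+1)$. The ``enrichment'' you propose --- attaching stars to several candidate next-centers and probing only one branch --- is left entirely unverified, and it is precisely the step that would have to carry the whole proof. As it stands, there is no argument that any oblivious distribution over chain-of-stars instances forces ratio $k/(k+1)$; indeed, the difficulty you ran into (a randomized algorithm can hedge by coloring each potential connecting edge with probability bounded away from $1$) is exactly why a chain construction with one star per step seems unlikely to be pushed to the sharp constant without a substantially different idea.

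The paper's proof avoids Yao's principle altogether and uses a different, much denser construction. The adversary first reveals a path $P = \langle e_1,\ldots,e_m\rangle$. If $E[\RAND(P)] \leq \frac{k}{k+1}m$, it stops; since $k\geq 2$, $\OPT$ colors all of $P$ and the bound follows immediately. Otherwise it attaches a $k$-edge star $S_i$ to \emph{every} vertex $v_i$ of the path. Now $\OPT$ rejects the whole path and colors all $k(m+1)$ star edges, while every path edge that $\RAND$ colored consumes a color at \emph{two} star centers, giving $E[\RAND(S)] \leq k(m+1) - 2E[\RAND(P)]$ and hence a total of at most $k(m+1) - E[\RAND(P)] < \frac{k}{k+1}\,\OPT + O(1)$. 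Note that branching on the \emph{expected} value $E[\RAND(P)]$ is legitimate against an oblivious adversary, since this quantity is determined by the algorithm's description, not by its coin flips. The moral is that the bound comes from charging each colored path edge twice (once at each endpoint's star), not from a per-step recurrence; if you want to salvage your approach, you would need your distribution to reproduce this factor-of-two charging, which the single-successor chain cannot do.
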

\begin{proof}
The adversary first reveals the edges of a path $P = \langle e_1,\ldots
, e_m \rangle$, for some large $m\in\mathbb{N}$.
Let $v_1,\ldots, v_{m+1}$ be the vertices in the path such that
 $e_i=(v_i,v_{i+1})$, for $1\leq i\leq m$. 
If $\E[\RAND(P)]\leq\frac{k}{k+1}m$, the adversary reveals no more
edges.  
If $\E[\RAND (P)]>\frac{k}{k+1}m$, then for each $i$, $1\leq i\leq m+1$,
 the adversary reveals $k$ edges constituting a star, $S_i$, with
 center vertex $v_i$. 
Let $S$ be the set consisting of the edges of every star $S_i$ for $1\leq i\leq m+1$.

If the adversary only reveals the edges of the path $P$, then $\E[\RAND (P)]\leq\frac{k}{k+1}m$ and so $\E[\RAND(P)]\leq\frac{k}{k+1}\OPT (P)$. Indeed, \OPT can color all $m$ edges in $P$, since $k\geq 2$ and so $\OPT(P)=m$. Assume now that the adversary also reveals the stars. In this case, \OPT rejects all edges of the path and instead colors the $k$ edges of each star. Thus, $\OPT(P\cup S)=k(m+1)$. Note that each of the edges $e_i=(v_i, v_{i+1})$ is incident to the center vertices of both $S_i$ and $S_{i+1}$. This implies that $\E[\RAND(S)]\leq k(m+1)-2\E[\RAND(P)]$. Using the assumption $\E[\RAND(P)] >\frac{k}{k+1}m$, we get that
\begin{align*}
\E[\RAND(P\cup S)]&=\E[\RAND(P)]+\E[\RAND(S)]\\
&\leq \E[\RAND(P)] + k(m+1)-2\E[\RAND(P)]\\
&\leq k(m+1)-\frac{k}{k+1}m\\
&=\frac{k(km+k+1)}{k+1}\\
&=\frac{k}{k+1}k(m+1)+\frac{k}{k+1}\\
&=\frac{k}{k+1}\OPT (P\cup S)+\frac{k}{k+1}
\end{align*}
Since $m$ can be arbitrarily large, this shows that $\RAND$ cannot be better than $\frac{k}{k+1}$-competitive.
\qed\end{proof}

We now show that 
the competitive ratio of any fair algorithm tends to 1 as $k$ tends to infinity.

\begin{theorem}
\label{lowktree}
If $\FAIR$ is a fair algorithm, then for any $k \geq 2$,
\begin{displaymath}
C_{\FAIR}^{\text{{\sc Tree}}}(k)\geq\frac{2\sqrt{k}-2}{2\sqrt{k}-1}.
\end{displaymath}
\end{theorem}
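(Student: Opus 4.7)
The plan is to apply the charging technique of Section~\ref{lowertech} with $C=\frac{2\sqrt{k}-2}{2\sqrt{k}-1}$, so that each edge of $\ed$ has surplus $1-C=\frac{1}{2\sqrt{k}-1}$, each edge of $\es$ has surplus $1$, and each edge of $\er$ must end with value at least $C$. The central structural fact is that, since $\FAIR$ is fair and the input is a tree, for every rejected edge $e=(u,v)$ all $k$ colors must appear on edges incident to $u$ or $v$ at the time of $e$'s arrival; because the graph is a tree the colored edges at $u$ and at $v$ are disjoint, so $\dc(u)+\dc(v)\geq k$ in the final coloring. I would also use the elementary identity $\ds(v)\geq \dc(v)+\dr(v)-k$, which follows from the OPT-degree bound $\dr(v)+\dd(v)\leq k$ and expresses that whenever a vertex has many rejected edges relative to its colored degree, enough of its colored edges must lie in $\es$ and carry the higher surplus~$1$.

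The charging would be built around the threshold $t=\sqrt{k}$: call a vertex $v$ \emph{heavy} if $\dc(v)\geq t$ and \emph{light} otherwise. For $k\geq 4$ the inequality $\dc(u)+\dc(v)\geq k\geq 2t$ ensures every rejected edge has at least one heavy endpoint; the small cases $k=2,3$ can be handled with a smaller threshold and direct inspection. The scheme has two phases: first each colored edge sends its surplus to its two endpoints, then each vertex distributes its received value to its incident rejected edges. For a rejected edge $e=(u,v)$ with both endpoints heavy, the two sides would share equally in supplying $C$; if only $u$ is heavy, then $\dc(u)>k-t$ is large enough that $u$ can afford to supply $C$ alone (the cost per colored edge at $u$ being at most $\frac{C}{\dc(u)}\leq\frac{C}{k-\sqrt{k}}$). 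The closing inequality at a heavy vertex $v$ requires that the aggregate demand from incident rejected edges not exceed $(1-C)\dc(v)+C\ds(v)$, and substituting $\ds(v)\geq \dc(v)+\dr(v)-k$ should reduce the condition to an algebraic inequality in $C$, $k$, $\dc(v)$ and $\dr(v)$ that is equivalent, in the tightest configuration, to $C\leq\frac{2\sqrt{k}-2}{2\sqrt{k}-1}$.

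The main obstacle will be calibrating the transfer rates so that the closing inequality reduces cleanly to the factor $2\sqrt{k}-1$ rather than to a weaker constant such as $\sqrt{k}+1$ or $k$. A naive uniform split---for example giving each rejected edge an equal half from each of its two endpoints---fails at vertices with few colored but many rejected edges, because the available surplus $(1-C)\dc(v)$ alone is then too small. It is essential both to route larger shares through the heavier endpoint and to leverage the bonus surplus of $C$ per $\es$-edge at vertices where $\dc(v)+\dr(v)$ exceeds $k$. Verifying that the bookkeeping closes simultaneously at every vertex of the tree, while guaranteeing that no rejected edge is undercharged, constitutes the bulk of the remaining work.
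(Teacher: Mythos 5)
Your setup coincides with the paper's: the same charging framework with $C=\frac{2\sqrt{k}-2}{2\sqrt{k}-1}$, the fairness fact $\dc(u)+\dc(v)\geq k$ for a rejected edge $(u,v)$, and the bound $\dd(x)+\dr(x)\leq k$ (equivalently your $\ds(x)\geq\dc(x)+\dr(x)-k$) are exactly the two facts the paper's proof runs on. The gap is that the theorem lives entirely in the redistribution scheme and the closing inequality, and yours is both underspecified and, where it is specified, incorrect. Your rule that a rejected edge with two heavy endpoints draws $C/2$ from each fails on the tight instance of Theorem~\ref{upperNFtrees}: a small-star center $y$ there has $\dc(y)=\sqrt{k}$ (heavy under your threshold $t=\sqrt{k}$), all of its colored edges in \ed, and two incident rejected edges, so it owes $C$ in total but holds only $(1-C)\sqrt{k}=\frac{\sqrt{k}}{2\sqrt{k}-1}<\frac{2\sqrt{k}-2}{2\sqrt{k}-1}=C$ once $k>4$. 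Your accounting is also internally inconsistent: colored edges ``send surplus to both endpoints,'' yet the budget you claim at a single vertex $v$ is the full $(1-C)\dc(v)+C\ds(v)$, which double-counts unless each edge's surplus is routed to exactly one endpoint. Finally, the assertion that substitution ``should reduce'' the closing inequality to $C\leq\frac{2\sqrt{k}-2}{2\sqrt{k}-1}$ is precisely what has to be proved; nothing in the proposal establishes it.

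The paper resolves both difficulties by rooting the tree instead of classifying vertices as heavy or light. Each colored edge sends its entire surplus to its parent vertex; each vertex $v$ then gives $\min\{m(v),C\}$ to its parent edge if that edge is in \er, and splits the remainder equally among its rejected child edges. This automatically routes the larger share through the ``heavier'' side without ever naming it. For a rejected edge $e=(x,y)$ one first reduces to the case $\dd(y)=\dc(y)$, so that $m_y(e)=(1-C)\dc(y)$, and then a four-way case analysis on the status of the parent edge of $x$ (in \er, in \es, in \ed, or nonexistent) turns $m_x(e)+m_y(e)\geq C$ into a quadratic inequality in $\dc(x)$ whose left-hand side, minimized over the reals at $\dc(x)=k-\sqrt{k}$, equals exactly $Ck$. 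That computation is where the constant $2\sqrt{k}-1$ comes from, and it is the work your proposal defers. (The extension from deterministic to randomized fair algorithms is then immediate, since the bound holds pointwise for every coloring a fair algorithm can produce.)
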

\begin{proof}
Assume first that \FAIR is a deterministic algorithm.
Let $T=(V,E)$ be a tree and assume that the edges of $T$ have been revealed to \FAIR in some order. 
For the analysis, we will view $T$ as a rooted tree by choosing an arbitrary vertex to be the root. 
As in the proof of Theorem~\ref{fftree}, we let $e=(x,y)$ imply that $x$ is the parent of $y$.

We will apply the charging technique from Section \ref{lowertech} to show that $\FAIR$ is $C$-competitive, where $C=\frac{2\sqrt{k}-2}{2\sqrt{k}-1}$. 
We will use the notation introduced 
just before Theorem~\ref{fftree}.
Recall that all edges in $\ec$ have an initial value of $1$. Edges in $\ed$ have a surplus of $1-C$ and edges in $\es$ have a surplus of $1$. Edges in $\er$ have an initial value of $0$. The goal is to distribute the surplus from $\ec$ among the edges in $\er$ so that all of them get a final value of at least $C$. To this end, we use the following strategy: 

\begin{enumerate}[Step 1:]
\item  Each edge $(v,u)\in \ec$ transfers its surplus to its
  parent vertex, $v$.\\
 For each vertex $v$, let $m(v)$ denote the value transferred to $v$
  in this step. 
\item Consider in turn all vertices $v\in V$. 
  \begin{enumerate}
  \item If the vertex $v$ has a parent edge $e'\in\er$, then $v$ transfers a
    value of $\min\left\{m(v), C\right\}$ to $e'$.
  \item Any value remaining at $v$ is distributed equally among the
    child edges of $v$ belonging to $\er$.
  \end{enumerate}
  For each edge $e$, let $m_v(e)$ denote the value transferred from $v$ to $e$ in this step.
\end{enumerate}
This finishes the description of the strategy. 

Fix an edge $e=(x,y)\in \er$. 
In Step 1, $y$ receives $m(y) = \dc(y)-C\dd(y)$.
Thus, in Step 2(a), $e$ receives 
 $$m_y(e) = \min\{C,\dc(y)-C\dd(y)\}$$
from $y$.
We will show that $m_x(e)+m_y(e)\geq C$. 
If $m_y(e) \geq C$, this is clearly true.
Thus, we may assume that $\dc(y)-C\dd(y)<C$.
Note that
\begin{align*}
\dc(y)-C\dd(y)<C &\Rightarrow \dc(y)<C(\dd(y)+1)< \dd(y)+1\\
&\Rightarrow \dc(y)-\dd(y)<1\\
&\Rightarrow \dc(y)=\dd(y).
\end{align*}
It follows that we only need to consider the case where $\dc(y)=\dd(y)$, meaning that all of the edges incident to $y$ which have been colored by $\FAIR$ have also been colored by \OPT. This implies that the value transferred to $e$ from its colored child edges is  
 $$m_y(e) = (1-C)\dc(y)\,.$$ 
When calculating a lower bound on $m_x(e)$, we consider four cases.
In each case, we use the following two simple facts.

\begin{enumerate}[Fact 1:]
\item[Fact 1:] $\dd(x)+\dr(x) \leq k$.
\item[Fact 2:] $\dc(x)+\dc(y) \geq k$.
\end{enumerate}

{\em Proof of Fact 1:}
Note that $\dd(x)+\dr(x)$ is exactly the number of edges incident to $x$ that are colored by \OPT.
Thus, Fact 1 follows trivially, since no algorithm can color more than $k$ edges incident to $x$.

{\em Proof of Fact 2:}
This follows from the fact that the edge $(x,y)$ is rejected by the fair algorithm \FAIR.

In what follows, we will rely on the following elementary fact:
Consider a quadratic polynomial $ax^2+bx+c$ with $a,b,c \in \mathbb{R}$ and $a>0$.
If the discriminant $D = b^2-4ac = 0$, then the polynomial is non-negative.

\paragraph{Case 1: The parent edge of $x$ belongs to $\er$.} 
In this case,
\begin{align}
\notag m_x(e) & \geq \frac{m(x)-C}{\dr(x)-1} = \frac{\dc(x)-C\dd(x)-C}{\dr(x)-1}\\
\label{case1mx}       & \geq \frac{\dc(x)-C\dd(x)-C}{k-\dd(x)-1}, \text{ by Fact 1.}
\end{align}
Thus, we obtain the following, where the second inequality follows from Fact~2, and the third inequality comes from $\dd(x)\leq \dc(x)$:
\begin{align}
\notag m_x(e)&+m_y(e) \geq \frac{\dc(x)-C\dd(x)-C}{k-\dd(x)-1} + (1-C)\dc(y)\\
\notag & \geq \frac{\dc(x)-C\dd(x)-C}{k-\dd(x)-1} + (1-C)(k-\dc(x))
\\
\notag & = \frac{\dc(x)-C\dd(x)-C+(k-\dd(x)-1)(1-C)(k-\dc(x))}{k-\dd(x)-1}\\
\notag & \geq \frac{\dc(x)-C\dd(x)-C+(k-\dc(x)-1)(1-C)(k-\dc(x))}{k-\dd(x)-1}\\
\notag & = \frac{\dc(x)-C\dd(x)-C+(1-C)(k-\dc(x))^2+(C-1)(k-\dc(x))}{k-\dd(x)-1}\\
\notag&= \frac{(1-C)(k-\dc(x))^2+(C-2)(k-\dc(x))+(1-C)k}{k-\dd(x)-1}+C\\
&\geq C. \label{case1fair}
\end{align}
Here, the final inequality (\ref{case1fair}) holds since the numerator of the fraction is a quadratic polynomial in $(k-\dc(x))$ whose discriminant is zero:
\begin{align*}
(C-2)^2-4\cdot (1-C)\cdot(1-C)k&=\left(\frac{-2\sqrt{k}}{2\sqrt{k}-1}\right)^2-\frac{4k}{\left(2\sqrt{k}-1\right)^2}=0.
\end{align*} 

\paragraph{Case 2: The parent edge of $x$ belongs to $\es$.} 
In this case,
\begin{align*}
m_x(e) & = \frac{m(x)}{\dr(x)} = \frac{(\dc(x)-1)-C\dd(x)}{\dr(x)}\\
       & \geq \frac{\dc(x)-C\dd(x)-1}{k-\dd(x)}, \text{ by  Fact 1.}
\end{align*}
Thus, we obtain the following, where the second inequality follows from Fact~2 and the third inequality comes from $\dd(x) = \dc(x)-\ds(x) \leq \dc(x)-1$:
\begin{align} 
m_x(e) + m_y(e)
&\notag \geq \frac{\dc(x)-C\dd(x)-1}{k-\dd(x)} + (1-C)\dc(y)\\
&\notag \geq \frac{\dc(x)-C\dd(x)-1}{k-\dd(x)} + (1-C)(k-\dc(x))
%, \text { by Fact 2}
\\
&\notag = \frac{\dc(x)-C\dd(x)-1 + (1-C)(k-\dc(x))(k-\dd(x))}{k-\dd(x)}\\
&\notag \geq \frac{\dc(x)-C\dd(x)-1 + (1-C)(k-\dc(x))(k-\dc(x)+1)}{k-\dd(x)}\\
&\notag =\frac{(1-C)(k-\dc(x))^2-C(k-\dc(x))+k-1-C\dd(x)}{k-\dd(x)}\\
&\notag=\frac{(1-C)(k-\dc(x))^2-C(k-\dc(x))+(1-C)k-1}{k-\dd(x)}+C\\
&\geq C. \label{case2fair}
\end{align}
Here, the final inequality (\ref{case2fair}) holds since the numerator of the fraction is a quadratic polynomial in $(k-\dc(x))$ whose discriminant is zero: 
\begin{align*}
C^2-4(1-C)((1-C)k-1)&=C^2-4\frac{1}{2\sqrt{k}-1}\left(\frac{(\sqrt{k}-1)^2}{2\sqrt{k}-1}\right)\\
&=C^2-4\frac{(\sqrt{k}-1)^2}{(2\sqrt{k}-1)^2}=0.
\end{align*}

\paragraph{Case 3: The parent edge of $x$ belongs to $\ed$.} 
In this case, we have 
 \begin{align*}
 m_x(e) & = \frac{m(x)}{\dr(x)} = \frac{(\dc (x)-1)-C(\dd(x)-1)}{\dr(x)}\\
        & \geq \frac{\dc (x)-C\dd(x)+C-1}{k-\dd(x)}, \text{ by  Fact 1.}
\end{align*}
Recall that $m_y(e) = \dc(y)(1-C)$. Thus, if $\dc(y)(1-C)\geq C$, we are done. Hence, we assume from now on that $\dc(y)<\frac{C}{1-C}=2\sqrt{k}-2$. By Fact~2, this implies that $\dc(x)> k-(2\sqrt{k}-2)$.
Therefore, $(1-C)(k-\dc(x))< (1-C)(2\sqrt{k}-2)=C$ which implies the second to last inequality below. The second inequality below follows from Fact~2 and the third inequality comes from $\dd(x)\leq \dc(x)$.
\begin{align}
\notag m_x(e)+m_y(e)
 &\geq\frac{\dc (x)-C\dd(x)+C-1}{k-\dd(x)}+ (1-C)\dc(y)\\
\notag &\geq\frac{\dc (x)-C\dd(x)+C-1}{k-\dd(x)}+ (1-C)(k-\dc(x)), \text{ by Fact 2 }\\
\notag &= \frac{\dc (x)-C\dd(x)+C-1 + (1-C)(k-\dc(s))(k-\dd(x))}{k-\dd(x)}\\
\notag &\geq \frac{\dc (x)-C\dd(x)+C-1 + (1-C)(k-\dc(x))(k-\dc(x))}{k-\dd(x)}\\
\notag &> \frac{\dc(x)-C\dd(x)-1 + (1-C)(k-\dc(x))(k-\dc(x)+1)}{k-\dd(x)}\\
\label{case3final} &\geq C.
\end{align}
Here, 
the final inequality (\ref{case3final}) follows exactly as in Case 2.

\paragraph{Case 4: The parent edge of $x$ does not exist.} 
In this case,
 $$m_x(e) = \frac{m(x)}{\dr(x)} = \frac{\dc(x)-C\dd(x)}{\dr(x)} > \frac{\dc (x)-C\dd(x)+C-1}{k-\dd(x)}\,.$$
Thus, $m_x(e)+m_y(e)\geq C$ follows as in Case 3.

\paragraph{Randomized algorithms.}
Assume now that \FAIR is a randomized algorithm.
The above analysis holds for any coloring that \FAIR may produce.
Hence, for any coloring produced by \FAIR, the number of colored edges
 is at least $(2\sqrt{k}-2)/(2\sqrt{k}-1)$ times the number of edges
 colored by \OPT.
Clearly, this means that the expected number of edges colored by \FAIR
 is at least  $(2\sqrt{k}-2)/(2\sqrt{k}-1)$ times the number of edges
 colored by \OPT.
\qed\end{proof}

We will show that the lower bound of Theorem \ref{lowktree} is essentially tight by providing a matching upper bound on the competitive ratio of \NF when $k$ is a square number. To this end, we will use the following result
 from \cite{kedge}.  
\begin{lemma}[Favrholdt and Nielsen~\cite{kedge}]
If the edges of a graph are colored in such a way that each color is used exactly $n$ or $n+1$ times for some $n\in\mathbb{N}$, then there exists an ordering of the edges such that \NF produces an equivalent coloring.
\label{lmNFlemma}
\end{lemma}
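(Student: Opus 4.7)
The plan is to exhibit an explicit ordering of the edges under which \NF reproduces the given coloring, up to a permutation of color names. Let \coloring denote the given coloring with color classes $E_1, \ldots, E_k$. First, I rename colors so that the classes of size $n+1$ come first: assume $|E_1| = \cdots = |E_m| = n+1$ and $|E_{m+1}| = \cdots = |E_k| = n$ for some $m \in \{0,1,\ldots,k\}$. The renamed coloring is equivalent to \coloring, so it suffices to produce an ordering that forces \NF to output it.

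Second, I present the edges in $n$ full rounds followed by one partial round. In full round $r \in \{1, \ldots, n\}$, pick an arbitrary unprocessed edge from each color class in the order $E_1, E_2, \ldots, E_k$. In the partial round, pick the unique remaining edge from each of $E_1, E_2, \ldots, E_m$ in that order. This is well defined since class $E_i$ supplies exactly one edge per round for the first $|E_i|$ rounds, and $|E_i| \geq n$ for every $i$ and $|E_i| = n+1$ exactly when $i \leq m$.

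Third, I verify by induction on the position of the presented edge that \NF assigns to it its target color under the relabeled coloring. The first edge has target color $1$ and receives color $1$ by the initialization convention of \NF. For any later edge $e$ with target color $c$, the construction guarantees that the previously presented edge has target color $c - 1$ if $c \geq 2$, and target color $k$ if $c = 1$; equivalently, the stored value satisfies $(c_{\text{last}} \bmod k) + 1 = c$. Hence \NF tries color $c$ first. Since the partial coloring built so far is a subcoloring of the target coloring, color $c$ is still free at both endpoints of $e$, so \NF assigns it to $e$.

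The main point requiring care is the boundary between the last full round and the partial round: one must check that $c_{\text{last}} = k$ at that instant, so that \NF's cyclic wraparound picks color $1$ for the first partial-round edge, and that the absence of colors $m+1, \ldots, k$ in the partial round does not cause \NF to skip ahead to some earlier available color. Both points follow from the inductive invariant that $c_{\text{last}}$ equals the target color of the previously presented edge, since then \NF's next choice is forced: the target color of the next edge is always exactly $(c_{\text{last}} \bmod k) + 1$, which is available by the subcoloring argument above.
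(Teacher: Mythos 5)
Your proof is correct. Note that the paper itself does not prove this lemma --- it is imported verbatim from \cite{kedge} --- so there is no in-paper argument to compare against; your round-robin construction (relabel so the size-$(n+1)$ classes are $E_1,\dots,E_m$, then present one edge from each class in cyclic color order for $n$ full rounds plus a partial round over $E_1,\dots,E_m$) is the natural one and your two key invariants are exactly what is needed: $c_{\text{last}}$ always equals the target color of the previously presented edge, so \NF's first candidate is the target color of the next edge, and that color is available because the partial coloring agrees with the (proper) target coloring on all edges revealed so far. The wraparound at the end of each round and at the start of the partial round is handled correctly by the cyclic definition of \NF, and the argument degenerates gracefully when $m=0$ or $m=k$.
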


The following corollary follows easily from Lemma \ref{lmNFlemma}.

\begin{corollary}
\label{cor:next-fit}
Consider a graph, $G=(V,E)$, and a coloring, \coloring, of all edges of $G$ using
 at most $k$ colors.
Let $H$ be a graph consisting of $k$ disjoint copies of $G$. 
There exists an ordering of the edges of $H$ such that, for each of
 the $k$ copies of $G$ in $H$, the coloring produced by \NF is
 equivalent to \coloring.
\label{corolNF}
\end{corollary}

\begin{proof}
Let $G_1$, $G_2$, \ldots, $G_k$ denote the $k$ copies of $G$.
Furthermore, let $\coloring_1$, $\coloring_2$, \ldots, $\coloring_k$
 be the $k$ colorings that can be obtained from \coloring by cyclic
 permutations of the colors $1,2,\ldots,k$.
If, for $1 \leq i \leq k$, $G_i$ is assigned the coloring
 $\coloring_i$, we obtain a coloring of $H$ where all colors are used
 the same number of times.
The result now follows from Lemma~\ref{lmNFlemma}.
\qed\end{proof}

Note that Corollary \ref{corolNF} implies that if $\mathcal{G}$ is
some family of graphs and $\mathcal{G}$ is closed under disjoint
union, then \NF has the worst possible competitive ratio among fair
algorithms for \mcarg{$\mathcal{G}$}. 
This can be seen in the following way:
For any graph, $G$, and any coloring, \coloring, of $G$ produced by a
 fair algorithm, the adversary can do the following:
\begin{itemize}
\item Make $k$ copies of $G$, resulting in a graph $H$.
\item Give the edges of $H$ corresponding to the colored edges of \coloring.
 According to Corollary~\ref{cor:next-fit}, these edges can be given in order, such that the edges of each copy of $G$ receives a coloring equivalent to \coloring.
\item Give the edges of $H$ corresponding to edges that were not colored by \coloring.
 Since \coloring was produced by a fair algorithm, \NF will not be able to color any of these edges.
\end{itemize}
Hence, for any sequence, $E_G$, of edges and any fair algorithm \FAIR, there is a sequence, $E_H$, of edges, such that \NF uses just as many colors on $E_H$ as \FAIR does on $E_G$, and the optimal number of colors is the same for both sequences.

Even though \text{\sc Tree} is not closed under disjoint union, a forest consisting of $k$ trees may be made into a single tree by revealing $k-1$ edges connecting the $k$ trees. 
Since this will add at most $k-1$ to the number of edges colored
by \NF, we may still apply Corollary \ref{corolNF} for the class {\sc Tree}.

\begin{theorem}
\label{upperNFtrees}
For $k\geq 4$, 
\begin{equation*}
C_{\NFm}^{\text{{\sc Tree}}}(k)\leq\frac{\frac{k}{\lceil\sqrt{k}\rceil}+\lceil\sqrt{k}\rceil-2}{\frac{k}{\lceil\sqrt{k}\rceil}+\lceil\sqrt{k}\rceil-1}.
\end{equation*}
In particular, if $k=n^2$ for some integer $n\geq 2$, then \NF is a worst possible
fair algorithm with
\begin{equation*}
  C_{\NFm}^{\text{{\sc Tree}}}(k) = \frac{2\sqrt{k}-2}{2\sqrt{k}-1}.
\label{upperNF}
\end{equation*}
\end{theorem}
\begin{proof}
The lower bound for the case where $k$ is a square number follows from Theorem~\ref{lowktree}. 
For the upper bound, we define a tree $T=(V,E)$ and a subset
 $E' \subset E$.
We specify a coloring, \coloring, of $E'$ with the property that
 each edge in $E \setminus E'$ is adjacent to edges of all $k$ colors.

We first describe $E'$ and \coloring.
The tree $T$ contains $N$ \emph{bunches} of stars, for some large $N$. 
Each bunch consists of a set of stars:
\begin{itemize}
\item One {\em large} star with $k-{\ceilk}$ edges
  colored with $\colorset_{1,k-\ceilk}$.\\
 The center vertex of the large star in bunch $i$, $1 \leq i \leq N$, is
  called $v_i$.
\item ${\ceilk}-1$ {\em small} stars, each with $\ceilk$ edges
 colored with $\colorset_{k-\ceilk+1,k}$\,.
\end{itemize}

We now describe $E \setminus E'$.
For each $i$, $1 \leq i \leq N$, $E \setminus E'$ contains an edge between $v_i$ and the center vertex of each of the small stars in bunch $i$.
For $1 \leq i < N$, there is an edge from $v_{i+1}$ to the center vertex of one of the 
 small stars in the $i$th bunch. 
Note that, after assigning the coloring \coloring to $E'$, none of the
 edges in $E \setminus E'$ can be colored.

The adversary will use $k$ disjoint copies, $T_1=(V_1,E_1),\ldots , T_k=(V_k,E_k)$, of
 $T$. 
For each $T_i$, let $E'_i$ denote the set of edges corresponding to
 $E'$ and let $T'_i=(V_i,E'_i)$.
If the edges of $E'_i \cup E'_2 \cup \ldots \cup E'_k$ are given
 first, it follows from Corollary~\ref{corolNF} that they can be given
 an order such that the coloring produced by \NF on each $T'_i$ is
 equivalent to \coloring.
Afterwards, no other edges can be colored.

Finally, the $k$ disjoint trees are connected, using $k-1$ edges
 between vertices that have degree one in the trees.
The resulting tree is called $\mathcal{T}$.

Since $k\geq 4$, we must have $\ceilk +2\leq k$ and so the maximum
degree of the graph is $k$. 
Thus, since the graph has no cycles, \OPT colors all edges of the graph.

\NF colors
\begin{align*}
\nf(\mathcal{T}) &= kN\big(k-\ceilk+(\ceilk-1)\ceilk\big)+k-1\\
& = kN\left(k+\ceilk^2-2\ceilk \right)+k-1
\end{align*}
 edges and rejects $k\big(N(\ceilk-1)+N-1\big) = kN\ceilk-k$ edges. 
Since \OPT colors all edges in the graph, 
 $$\OPT(\mathcal{T})=kN\left(k+\ceilk^2-\ceilk \right)-1\,.$$ 
Thus, 
\begin{align*}
\nf(\mathcal{T})&\leq\frac{k+\ceilk^2 -2\ceilk}{k+\ceilk^2-\ceilk}\OPT(\mathcal{T})+k\\
&= \frac{\frac{k}{\lceil\sqrt{k}\rceil}+\lceil\sqrt{k}\rceil-2}{\frac{k}{\lceil\sqrt{k}\rceil}+\lceil\sqrt{k}\rceil-1}\OPT(\mathcal{T})+k
\end{align*} 
Since $N$ can be arbitrarily large, the result follows.
\qed\end{proof}

Theorem~\ref{upperNFtrees} shows that the bound of Theorem~\ref{lowktree} is tight whenever $k$ is a square number. We will briefly consider the case where $k$ is not a square
number. Any fair algorithm for {\sc Edge-$1$-Coloring(Tree)} is just
the greedy matching algorithm. It is observed in several papers that this algorithm is $\frac12$-competitive (for all input graphs) and that no deterministic algorithm can do better, even when the input graph is a tree.  If $k\geq 2$, but not a square number, then the lower bound from Theorem \ref{lowktree} can be slightly improved by using the fact that $\dc(x)$ must be an integer. In particular, for $k=2$, it follows from Theorem~3.1 in~\cite{kedge} that any fair algorithm is $\frac12$-competitive on any class of graphs. Combining this result with Theorem~\ref{detupper} shows that on trees, the competitive ratio of any fair algorithm is exactly $\frac12$.
We show in Theorems~\ref{FAIRk3} and \ref{NFk3} that any fair algorithm for {\sc Edge-$3$-Coloring(Tree)} is $\frac58$-competitive, and the competitive ratio of \NF is exactly $\frac58$.
Thus, for $k \leq 4$, we have completely tight bounds.
For $k \geq 5$, the difference between our upper and lower bounds is less than $0.0153$ and tends to $0$ as $k$ tends to infinity. Also, we get that \FF has a strictly better competitive ratio than \NF on trees whenever $k\geq 3$.

\begin{theorem}
\label{FAIRk3}
If $\FAIR$ is a fair algorithm, then $C_{\FAIR}^{\text{{\sc Tree}}}(3)\geq \frac{5}{8}$.
\end{theorem}
\begin{proof}
In order to prove that all fair algorithms are $\frac{5}{8}$-competitive on trees when $k=3$, we modify Step~1 of the strategy used in Theorem~\ref{lowktree} for distributing the surplus.  Step~2 is unmodified, but for convenience, we give both steps. Let $C=\frac{5}{8}$.
\begin{enumerate}[Step 1:]
\item Each edge $(v,u)\in E_+^s$ transfers a value of $\frac{7}{8}$ to its parent vertex, $v$, and a value of $\frac{1}{8}$ to its child vertex, $u$. Each edge $(v,u)\in E_+^d$ transfers its surplus of $1-C=\frac{3}{8}$ to its parent vertex, $v$.\\
 For each vertex $v$, let $m(v)$ denote the value transferred to $v$
  in this step. 
\item Consider in turn all vertices $v\in V$. 
  \begin{enumerate}
  \item If the vertex $v$ has a parent edge $e'\in\er$, then $v$ transfers a
    value of $\min\left\{m(v), C\right\}$ to $e'$.
  \item Any value remaining at $v$ is distributed equally among the
    child edges of $v$ belonging to $\er$.
  \end{enumerate}
  For each edge $e$, let $m_v(e)$ denote the value transferred from $v$ to $e$ in this step.
\end{enumerate}
This finishes the description of the strategy. 

Fix an edge $e=(x,y)\in E_{-}$. We need to show that $m_x(e)+m_y(e)\geq C=\frac{5}{8}$. First, note that if $d_+^s(y)\geq 1$, then $m(y)\geq \frac{7}{8}$ and we are done. Also, if $d_+^d(y)\geq 2$, then $m(y)\geq 2(1-C)=\frac{6}{8}$, and again we are done. Thus, we may assume that $d_+^s(y)=0$ and $d_+^d(y)\leq 1$. We now show that $e$ receives a value of at least $\frac{5}{8}$ in all such cases.

\paragraph{Case 1: $d_+(y)=1$.} In this case, $m_y(e)=\frac{3}{8}$, so we just need to show that $m_x(e)\geq \frac{2}{8}$.
Note that $d_+(x) \geq k - d_+(y) = 2$. Thus, $x$ has at least one child edge belonging to $E_+$.
\begin{itemize} %d_+(y)=1
\item {\em Case 1.1: The parent edge of $x$ belongs to $E_+^s$.}
  In this case, the parent edge of $x$ contributes a value of $\frac18$ to $m(x)$. 
\begin{itemize}%d_+(y)=1, parent\in E_+^s
\item {\em Case 1.1.1: $d_{-}(x)=3$.} In this case, $d^s_{+}(x)=d_{+}(x)\geq 2$, and therefore at least one child edge of $x$ belongs to $E_+^s$. It follows that $m_x(e)\geq \frac{1}{3} (\frac{1}{8}+\frac{7}{8})=\frac{1}{3} > \frac{2}{8}$.
\item {\em Case 1.1.2: $d_{-}(x)\leq 2$.} Since $x$ has at least one child edge in $E_+$, $m_x(e)\geq\frac{1}{2}(\frac{1}{8}+\frac{3}{8})=\frac{2}{8}$.
\end{itemize}%d_+(y)=1, parent\in E_+^s

\item {\em Case 1.2: The parent edge of $x$ belongs to $E_+^d$.}
  Since in this case, $d_+^d(x) \geq 1$, it follows that $d_-(x) \leq 2$.
  Thus, we have only the following two subcases:
\begin{itemize}%d_+(y)=1, parent\in E_+^d
\item {\em Case 1.2.1: $d_{-}(x)=2$.} At least one child edge of $x$ must belong to $E_+^s$. Thus, $m_x(e)\geq \frac{1}{2}\cdot\frac{7}{8}>\frac{3}{8}$.
\item {\em Case 1.2.2: $d_{-}(x)=1$.} Since at least one child edge of $x$ belongs to $E_+$, $m_x(e)\geq \frac{3}{8}$.
\end{itemize}%d_+(y)=1, parent\in E_+^d

\item {\em Case 1.3: The parent edge of $x$ belongs to $E_-$.}
  In this case, $x$ has at most two child edges belonging two $E_-$.
  Furthermore, $x$ has at least two child edges in $E_+$ and at least one of them belongs to $E_+^s$.
  Thus, $m_x(e) \geq \frac12 (\frac38 + \frac78 - \frac58) > \frac28$.

\item {\em Case 1.4: $x$ has no parent edge.} Since $d_{+}(x)\geq 2$, $m_x(e)\geq \frac13 \cdot 2\cdot \frac{3}{8} = \frac28$.
\end{itemize}%d_+(y)=0

\paragraph{Case 2: $d_{+}(y)=0$.} In this case, $d_+(x)=3$, so $x$ has at least two child edges belonging to $E_+$. 
Furthermore, $d_+^s(x) \geq 1$.
We show that $m_x(e) \geq \frac58$ in all subcases.
\begin{itemize}
\item {\em Case 2.1: The parent edge of $x$ belongs to $E_+^s$.}
\begin{itemize}
\item {\em Case 2.1.1: $d_-(x)=3$.} In this case, $d_+^s(x)=3$, and hence $m_x(e)=\frac{1}{3}(\frac{1}{8}+2\cdot \frac{7}{8})=\frac{5}{8}$.
\item {\em Case 2.1.2: $d_-(x)= 2$.} In this case, $d_+^s(x)\geq2$, and hence, $m_x(e)\geq \frac{1}{2}(\frac{1}{8}+\frac{7}{8}+\frac{3}{8})> \frac{5}{8}$.
\item {\em Case 2.1.2: $d_-(x)= 1$.} Two child edges of $x$ must belong to $E_+$ and so $m_x(e)\geq \frac{1}{8}+2\cdot \frac{3}{8}> \frac{5}{8}$.
\end{itemize}%d_+(y)=0, parent\in E_+^s

\item {\em Case 2.2: The parent edge of $x$ belongs to $E_+^d$.}
In this case, $d_-(x) \leq 2$. Thus, $m_x(e) \geq \frac12(\frac78 + \frac38) = \frac58$.

\item {\em Case 2.3: The parent edge of $x$ belongs to $E_-$.} In this case, $x$ has at least three child edges in $E_+$, and at least two of them belong to $E_+^s$. Moreover, $x$ has at most two child edges belonging to $E_-$. Thus, $m_x(e)\geq \frac{1}{2}(2\cdot\frac{7}{8}+ \frac38-\frac{5}{8})=\frac{6}{8}$.
\item {\em Case 2.4: $x$ has no parent edge.}
  \begin{itemize}
  \item {\em Case 2.4.1: $d_-(x) = 3$.}
    In this case, $d_+^s(x) = 3$.
    Thus, $m_x(e) = \frac13 \cdot 3 \cdot \frac78 = \frac78$.
  \item {\em Case 2.4.2: $d_-(x) \leq 2$.} 
Since $d_+^s(x) \geq 1$, $m_x(e) \geq \frac{1}{2}(\frac78 + 2 \cdot \frac38) > \frac68$.
  \end{itemize}
\end{itemize}
\qed\end{proof}

We now show that the analysis in Theorem~\ref{FAIRk3} is tight by showing that $C_{\NFm}^{\text{{\sc Tree}}}(3)=\frac{5}{8}$. This is done by creating an adversary graph which combines the two cases (cases 1.1.2 and 2.1.1) from the proof of Theorem~\ref{FAIRk3} for which the strategy used for distributing the surplus could only guarantee a value of exactly $\frac{5}{8}$.

\begin{theorem}
\label{NFk3}
 $C_{\NFm}^{\text{{\sc Tree}}}(3)=\frac{5}{8}$.
\end{theorem}
\begin{proof}
The lower bound follows from Theorem~\ref{FAIRk3}.
For the upper bound, let $N$ be an integer divisible by $3$. 
The adversary graph for $N=3$ is illustrated in Fig.~\ref{fig:nfk3}.
The adversary first reveals $2N$ isolated edges (shown as the top vertical edges in Fig.~\ref{fig:nfk3}). 
For $1 \leq i \leq 3$, denote by $M_i$ the subset of these $2N$ edges colored with the color $i$ by \NF. Then, the adversary reveals a path $P=\langle e_1,\ldots , e_{N+1}\rangle$ consisting of $N+1$ new edges, revealing the edges from left to right. An \emph{inner vertex} of $P$ is a vertex of degree $2$. For each inner vertex $v$ of $P$, the adversary reveals four edges $(v,u_1),\ldots , (v,u_4)$. Note that \NF colors $(v,u_1)$ with the unique color $c\in \mathcal{C}_{1,3}\setminus \mathcal{C}_v$ at $v$ and rejects the other three edges. The adversary then reveals an edge $(u_1, u_1')$ which is colored with the color $(c+1) \bmod 3$ by \NF. Finally, the adversary picks two distinct isolated edges $(x,y),(x',y')\in M_{(c+2) \bmod 3}$ and reveals two new edges $(u_1, y)$ and $(u_1, y')$. \NF rejects both of these edges. The adversary continues with the next inner vertex (unless $v$ was the last inner vertex) and repeats the above procedure. Note that for each $c\in \mathcal{C}_{1,3}$, there are $N/3$ inner vertices on the path for which $\mathcal{C}_{1,3}\setminus \mathcal{C}_v=\{c\}$. Thus, the adversary does not run out of edges to pick from $M_c$ (in fact, it uses all $\frac{2}{3}N$ of them). This finishes the description of the adversary strategy. 

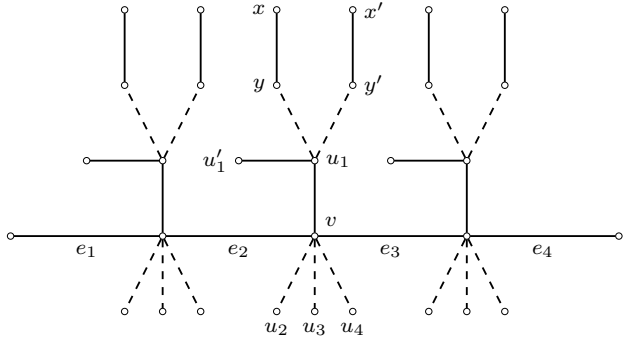
\begin{figure}
\centering
\begin{tikzpicture}
	\begin{pgfonlayer}{nodelayer}
		\node [style=rn] (0) at (-2, -0) {};
		\node [style=rn] (1) at (-4, -0) {};
		\node [style=rn] (2) at (-2, 1) {};
		\node [style=rn] (3) at (-3, 1) {};
		\node [style=rn] (4) at (-2.5, 2) {};
		\node [style=rn] (5) at (-1.5, 2) {};
		\node [style=rn] (6) at (-2.5, 3) {};
		\node [style=rn] (7) at (-1.5, 3) {};
		\node [style=rn] (8) at (-2, -1) {};
		\node [style=rn] (9) at (-2.5, -1) {};
		\node [style=rn] (10) at (-1.5, -1) {};
		\node [style=rn, label={above right:$v$}] (11) at (0, -0) {};
		\node [style=rn, label={below:$u_3$}] (12) at (0, -1) {};
		\node [style=rn, label={below:$u_2$}] (13) at (-0.5, -1) {};
		\node [style=rn, label={left:$u'_1$}] (14) at (-1, 1) {};
		\node [style=rn, label={below:$u_4$}] (15) at (0.5, -1) {};
		\node [style=rn, label={right:$u_1$}] (16) at (0, 1) {};
		\node [style=rn, label={right:$y'$}] (17) at (0.5, 2) {};
		\node [style=rn, label={left:$y$}] (18) at (-0.5, 2) {};
		\node [style=rn, label={left:$x$}] (19) at (-0.5, 3) {};
		\node [style=rn, label={right:$x'$}] (20) at (0.5, 3) {};
		\node [style=rn] (21) at (2.5, -1) {};
		\node [style=rn] (22) at (1.5, 2) {};
		\node [style=rn] (23) at (2.5, 3) {};
		\node [style=rn] (24) at (2, 0) {};
		\node [style=rn] (25) at (2, -1) {};
		\node [style=rn] (26) at (1, 1) {};
		\node [style=rn] (27) at (2, 1) {};
		\node [style=rn] (28) at (1.5, -1) {};
		\node [style=rn] (29) at (2.5, 2) {};
		\node [style=rn] (30) at (1.5, 3) {};
		\node [style=rn] (31) at (4, -0) {};
	\end{pgfonlayer}
	\begin{pgfonlayer}{edgelayer}
		\draw [style=ed] (0) to (2);
		\draw [style=ed] (2) to (3);
		\draw [style=ed] (1) to node[label={below:$e_1$}]{} (0);
		\draw [style=ed] (0) to node[label={below:$e_2$}]{} (11);
		\draw [style=ed] (11) to node[label={below:$e_3$}]{} (24);
		\draw [style=ed] (24) to node[label={below:$e_4$}]{} (31);
		\draw [style=ed] (6) to (4);
		\draw [style=ed] (5) to (7);
		\draw [style=ed] (19) to (18);
		\draw [style=ed] (17) to (20);
		\draw [style=ed] (30) to (22);
		\draw [style=ed] (29) to (23);
		\draw [style=ned] (2) to (4);
		\draw [style=ned] (2) to (5);
		\draw [style=ned] (0) to (9);
		\draw [style=ned] (0) to (8);
		\draw [style=ned] (0) to (10);
		\draw [style=ned] (11) to (13);
		\draw [style=ned] (11) to (12);
		\draw [style=ned] (11) to (15);
		\draw [style=ned] (24) to (28);
		\draw [style=ned] (24) to (25);
		\draw [style=ned] (24) to (21);
		\draw [style=ned] (27) to (22);
		\draw [style=ned] (27) to (29);
		\draw [style=ned] (16) to (18);
		\draw [style=ned] (16) to (17);
		\draw [style=ed] (24) to (27);
		\draw [style=ed] (27) to (26);
		\draw [style=ed] (11) to (16);
		\draw [style=ed] (16) to (14);
	\end{pgfonlayer}
\end{tikzpicture}
\caption{The adversary graph used in the proof of Theorem~\ref{NFk3} when $N=3$. Solid edges are colored by \NF and dashed edges are rejected by \NF.}
\label{fig:nfk3}
\end{figure}

\NF colors the $2N$ isolated edges, the $N+1$ edges of the path $P$, and for each inner vertex $v$ it colors $(v,u_1)$ and $(u_1,u_1')$. Thus, $\nf(I)=2N+(N+1)+2N=5N+1$. On the other hand, $\OPT$ 
rejects all edges of the path $P$. Furthermore, for each inner vertex $v$, $\OPT$ rejects $(v, u_1)$.
The remaining edges form a graph with maximum degree $3$ and hence, \OPT can color all of these $8N$ edges.
It follows that $\nf(I)=\frac{5}{8}\OPT(I)+1$. This shows that $\NF$ cannot be better than $\frac{5}{8}$-competitive.
\qed\end{proof}

\section{Open Problems}
Finding optimal online algorithms for \mck in general and on other classes of graphs is an interesting open problem. We believe that the techniques used in the proofs of Theorems~\ref{fftree} and \ref{lowktree} can be generalized to, e.g., graphs of bounded degeneracy. In particular, graphs of bounded degeneracy can be oriented so that each vertex has bounded outdegree and the resulting digraph is acyclic. This makes it possible to use strategies for redistributing the surplus similar to the ones we have used for trees.

Deciding whether there is an algorithm better than \FF on trees would
also be interesting.
Such an algorithm could only be significantly better for small values
of $k$, and it would have to be both randomized and unfair.

\paragraph*{Acknowledgment.}
~The authors would like to thank the anonymous reviewers for helpful comments on this work and its presentation.

\bibliographystyle{plain}
\bibliography{refs}

\end{document}